\definecolor{Cquestioncolor}{rgb}{.8,.0,.0}
\definecolor {colorgrey}{rgb}{.75,.75,.75} 
\newcommand{\Sc}[1] {\, #1 \,}
\newcommand{\E}[1] {{\em #1}}       
\newcommand{\F}[1] {\textsf{#1}}    
\newcommand{\sref}[1]{\S\ref{#1}}
\newcommand{\cref}[1]{Claim~\ref{#1}}
\newtheorem{fact}{Fact}[section]
\newenvironment{proof}[1][Proof]{\begin{trivlist}
\item[\hskip \labelsep {\bfseries #1}]}{\end{trivlist}}
\newcommand{\out}[1] {}
\newcounter{codeLineCntr}
\newif\ifnotes
\newcommand{\punt}[1]{}
\newcommand{\secref}[1]{Section~\ref{sec:#1}}
\newcommand{\figref}[1]{Figure~\ref{fig:#1}}
\newcommand{\fctref}[1]{Fact~\ref{fct:#1}}
\newcommand{\proc}[1]{\ifmmode\mbox{\textsc{#1}}\else\textsc{#1}\fi}
  \newcommand{\func}[1]{\ifmmode\mathrm{#1}\else\textrm{#1}fi} %
\newcounter{remark}[section]
\newcommand{\myremark}[3]{
\refstepcounter{remark}
~\\
{\bf $[$ \scriptsize{#1}.{\theremark}:}{\textbf{\scriptsize{#3}}$]$}
\\
}
\renewcommand{\myremark}[3]{{\bf TODO[#1]: #3}}
\renewcommand{\myremark}[3]{}
\newcommand{\uremark}[1]{\myremark{Umut}{U}{#1}}
\newcommand{\aremark}[1]{\myremark{Arthur}{U}{#1}}
\newcommand{\ur}[1]{\uremark{#1}}
\title{Analysing and Understanding Parallel Speedups}
\title{Techniques for Empirical Analysis of Memory Effects in Parallel
  Programs}
\title
{Parallel Work Inflation, Memory Effects, and their Empirical Analysis
}
\newcommand{\smallheight}{2.2in}
\newcommand{\smallwidth}{2.2in}
\newcommand{\infw}[2]{F({#1},{#2})}
\newcommand{\infwc}[1]{F({#1})}
\newcommand{\figanalysisadjustheight}{\vspace{0.2em}}
\DeclareMathOperator{\Exp}{\mathbb{E}}
\begin{document}

\CopyrightYear{2017}
\copyrightdata{}

\maketitle


\begin{abstract}
In this paper, we propose an empirical method for evaluating the
performance of parallel code. Our method is based on a simple idea
that is surprisingly effective in helping to identify causes of poor
performance, such as high parallelization overheads, lack of adequate
parallelism, and memory effects. Our method relies on only the
measurement of the run time of a baseline sequential program, the run
time of the parallel program, the single-processor run time of the
parallel program, and the total amount of time processors spend idle,
waiting for work.

In our proposed approach, we establish an equality between the
observed parallel speedups and three terms that we call parallel work,
idle time, and work-inflation, where all terms except work inflation
can be measured empirically, with precision.  We then use the equality
to calculate the difficult-to-measure work-inflation term, which
includes increased communication costs and memory effects due to
parallel execution. By isolating the main factors of poor performance,
our method enables the programmer to assign blame to certain
properties of the code, such as parallel grain size, amount of
parallelism, and memory usage.

We present a mathematical model, inspired by the work-span model, that
enables us to justify the interpretation of our measurements. We
also introduce a method to help the programmer to visualize both the
relative impact of the various causes of poor performance and the
scaling trends in the causes of poor performance. Our method fits in a
sweet spot in between state-of-the-art profiling and visualization
tools. We illustrate our method by several empirical studies and
we describe a few experiments that emphasize the care that
is required to accurately interpret speedup plots.
\end{abstract}

\section{Introduction}

In the current state of the art, implementing a parallel algorithm on
a multicore machine requires more than translating the algorithm to a
parallel program by using a language or a parallelism API, such as
OpenMP~\cite{openmp}, TBB~\cite{ThreadingBuildingBlocksManual},
X10~\cite{x10-2005}, or Cilk Plus~\cite{IntelCilkPlus}.  During the
development cycle, the programmer will likely have to {\em tune} their
implementation by experimenting with several important parameters and
optimizations in order to elicit decent performance.  To this end, the
programmer typically compares the performance of the parallel code
with multiple processors to the performance of a sequential baseline
and computes the {\em speedup} achieved as the ratio of the time for
the baseline to the time for the multiprocessor run. It is well known
that, for this comparison to be meaningful, the baseline has to be
selected carefully and must be an optimal sequential algorithm and an
optimized implementation.

After an initial implementation, speedup curves that the programmer
obtains usually resemble those that are shown in
\figref{poor-speedups}.  Three of the speedup curves are taken from runs of
three different configurations of the Cilksort benchmark
~\cite{FrigoLeRa98}, and the other speedup is taken from one run of
the Maximal Independent Set
benchmark~\cite{Blelloch:2012:IDP:2145816.2145840}. These speedups
scale poorly, deviating significantly from the linear optimum. Faced
with such results, the programmer has to study the performance of the
code to identify and eliminate causes of suboptimal performance.

There are four main non-overlapping {\em factors} that contribute to
suboptimal parallel performance.
\begin{itemize}
\item \emph{Algorithmic overheads}, which correspond to
  the difference in the amount
  of work performed by the sequential baseline program and the
  sequential execution of the parallel program.

\item \emph{Scheduling overheads}, which consists of the cost of creating
  threads plus the cost of performing load balancing.

\item \emph{Lack of parallelism} in the application, that leads to {\em
  idling} processors which are starving for work.

\item \emph{Work inflation}, which we define as the increase in the
  cost of the operations performed in a parallel run compared with a
  single-processor run, when executing the parallel code.
\end{itemize}

Note that the first and the last factors are different.  
On the one hand, algorithmic overheads result primarily from the fact that a
parallel algorithm is usually more complex than a sequential algorithm
for the same problem. 
On the other hand, work inflation
measures the increase in the work of the parallel implementation as we
increase the number of processors.  Work inflation includes memory
subsystem effects, and the costs for communication, synchronization
such as memory fences and atomic operations, false sharing,
maintenance of cache coherency, contention at the memory bus, and
memory consistency protocol. Because work inflation occurs at the
hardware level, the overall impact of work inflation is difficult, if
not impossible, to measure directly. 

A key step in the tuning process is that of identifying which of
the four factors are significant. 
For example, as we will see, each of
the speedup curves in \figref{poor-speedups} is poor due to just one
or two of the four factors. By just looking at the speedup curves, it
is not possible to determine which factors harm scalability and by how
much.  In general, despite their ability to show scaling trends,
speedup curves can, by themselves, provide only vague hints
into what factors harm scalability.

Although there are several performance tools to analyze parallel
applications, there are currently neither tools nor widely-known
methods that enable programmers to analyze the relative impact on
scalability of the different factors, such as those listed above. The
Cilkview analyzer can be used to predict the scalability of an
application based on the logical parallelism expressed in control
structure of the code~\cite{HeLeiserson10}. However, if the code
expresses plenty of parallelism, Cilkview analyzer is unlikely to
provide additional insights into the causes of poor performance. Tools
such as Intel Thread Profiler~\cite{IntelParallelAdvisor}, Intel
Parallel Amplifier~\cite{IntelParallelAmplifier},
HPCToolkit~\cite{TallentMellorCrummey07},
Kismet~\cite{JeonGarciaLouie11} and
Kremlim~\cite{GarciaJeonDonghwan11} can provide detailed information
on the utilization of the processors over time and on the breakdown of
the relative importance of the subroutines of the
program. Predator~\cite{LiuBergerPredator} can detect false sharing in
instrumented runs of application code. Each of these tools fills an
important gap in the toolkit of a parallel programmer. Nevertheless,
none of these tools are suitable for analyzing all types of performance
issues.

A separate issue relates to profiling instrumentation.
Cilkview relies on binary instrumentation and analyzes only
instruction counts.  The other tools rely on various other forms of
instrumentation. Such instrumentation increases the risk
that instrumentation-specific overheads will 
themselves influence performance, and the overheads will do so
in ways that obscure the performance issues of interest.
Although it is sometimes essential to understand certain aspects of
performance, heavyweight instrumentation causes interference that can
obscure the global picture, that is, the performance of the production
binary, which typically has little or no instrumentation.

\begin{figure}
\begin{center}
\includegraphics[width=\smallwidth,height=\smallheight]{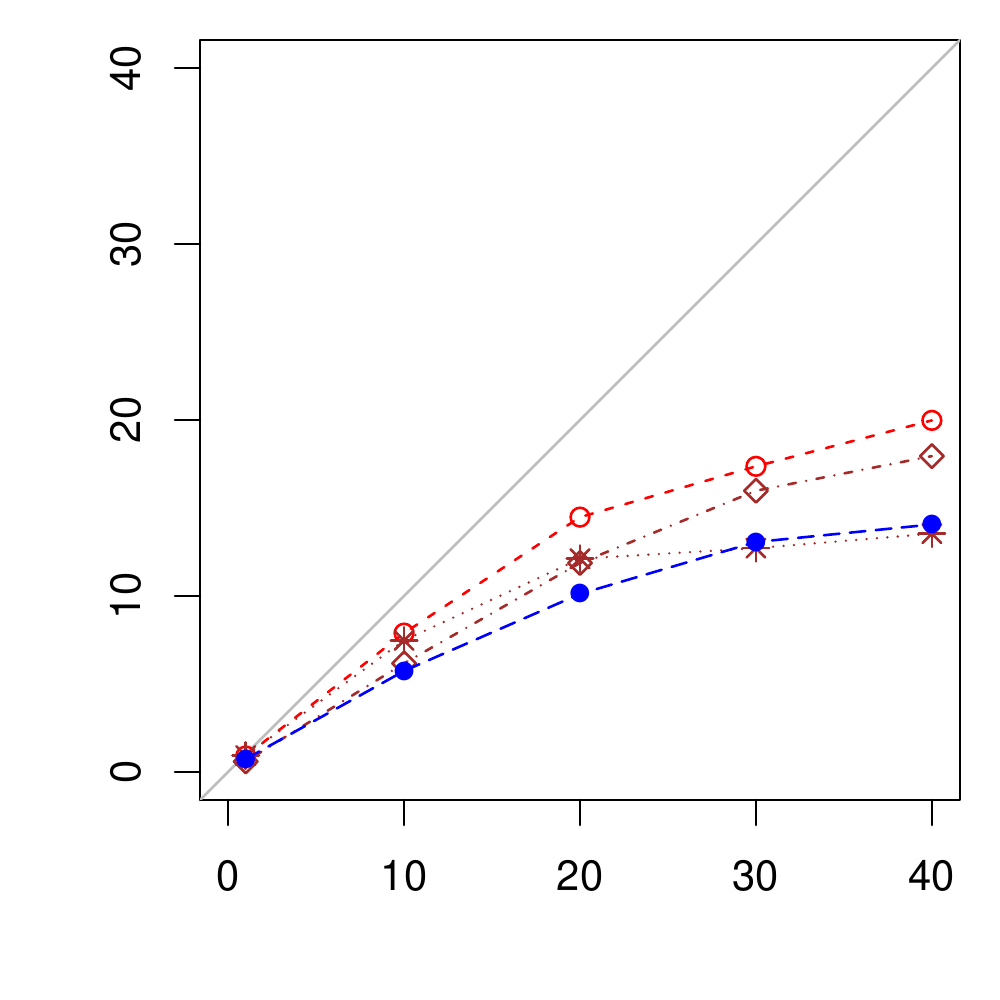}
\end{center}
\vspace{-2.0em}
\caption{Traditional speedup curve showing four poor speedups. The y axis
represents the speedup and the x axis the number of processors.}
\label{fig:poor-speedups}
\end{figure}

In this paper, we present an experimental method for diagnosing
observed performance and scalability problems.  Our proposal rests on
some simple observations but it provides a surprisingly effictive and
non-intrusive approach.

Our approach relies on the following measurements: 
\begin{itemize}
\item the sequential execution time of the baseline program,
\item the sequential execution time of the parallel program (i.e., the
  running time of the parallel program using a single processor),
\item the parallel execution time of the parallel program (with
  different numbers of cores),
\item the total time that processors spend idling (waiting for work).
\end{itemize}


Using these measures, we show that it is possible to derive the amount
of work inflation, a quantity that is difficult to measure directly.
More generally, we are able to calculate the amount of speedups lost
due to overheads associated with the parallel algorithm, the amount of
speedups lost due to idle time, and the amount of speedups lost due to
the work inflation. By measuring and calculating these values for
various number of processors, we can study scalability trends, and the
factors contributing the observed results.  As we describe 
(\sref{sec:measures}), these
quantities can be measured unintrusively, without heavy
instrumentation of the binary, and are therefore representative of the
actual, observed performance (they are not based on simulations or
profiling information). 



Using such measurements, we propose an approach to visulazing
important performance information in the form of {\em factored speedup
  plots} that include three additional speedup curves, all of which are
calculated with respect to the optimized sequential baseline.  These
plots enable studying the different contributing factors to the
speedups.

\begin{itemize}
\item A {\em maximal speedup plot} shows the speedups that the program
  would obtain  if we ignore work inflation and idle time. In other
  words, the maximal speedup plot shows the speedups that would be
  achieved if the speedup of the parallel program were scaling up
  linearly with the number of processors.

\item An {\em idle-time-specific speedup plot} takes into account idle
  time but ignores work inflation.  In other words, idle-time-specific
  speedups represent the speedups that would be obtained if only the
  idle time and algorithmic overheads (the overheads of the parallel
  program with respect to the baseline) were preventing the program
  from achieving maximal speedups.

\item An {\em inflation-specific speedup plot} shows the speedups that
  the program would obtain if we ignore idle time. In other words, the
  inflation-specific speedup represents the speedup that would be
  achieved if only the work inflation and the algorithmic overheads
  were preventing the program from achieving maximal speedups.
\end{itemize}

\figref{factored-speedup-poor} shows the factored speedup curves for
the Cilksort benchmark with one specific configuration.  Our factored
speedup plot enables the programmer to visualize all three curves at
once. The plot conveys three types of information: (1) the absolute
position of the curves, (2) the relative position of the curves (i.e.,
the gaps between the curves), (3) the shapes of the curves (i.e.,
curvature), which informs on the scaling of specific speedup curves.

\paragraph{Workflow.}
Our factored speedup plot plays a complementary role to the
parallel-performance analyzers. If the factored speedup plot suggests
lack of parallelism in the application, then the programmer may choose
to find the bottleneck using a tool such as Cilkview.  If the program
is large and it is not clear which pieces of the code to blame for
lack of parallelism, the programmer may search for the most
significant regions of code using one of the tools, such as HPCToolkit
and Kremlin. If work inflation is high, the programmer may choose to
look for potential false sharing with Predator, for
example. After a problematic region of code is identified, the
programmer may synthesize from the region a smaller benchmark program
and repeat the process from above.

Unlike many other other methods for analyzing performance of parallel
codes, ours enables the programmer to observe their production code
directly. As such, our method fits into a gap that we believe exists
between traditional speedup plots and existing parallel performance
tools, such as Cilkview. The close correspondence between the
production binary and our lightweight profiling binary is possible
thanks to the property that instrumentation we insert into the program has
no noticable impact on the performance of the parallel code.

\begin{figure}
\begin{center}
\includegraphics[width=3.0in,height=3.0in]{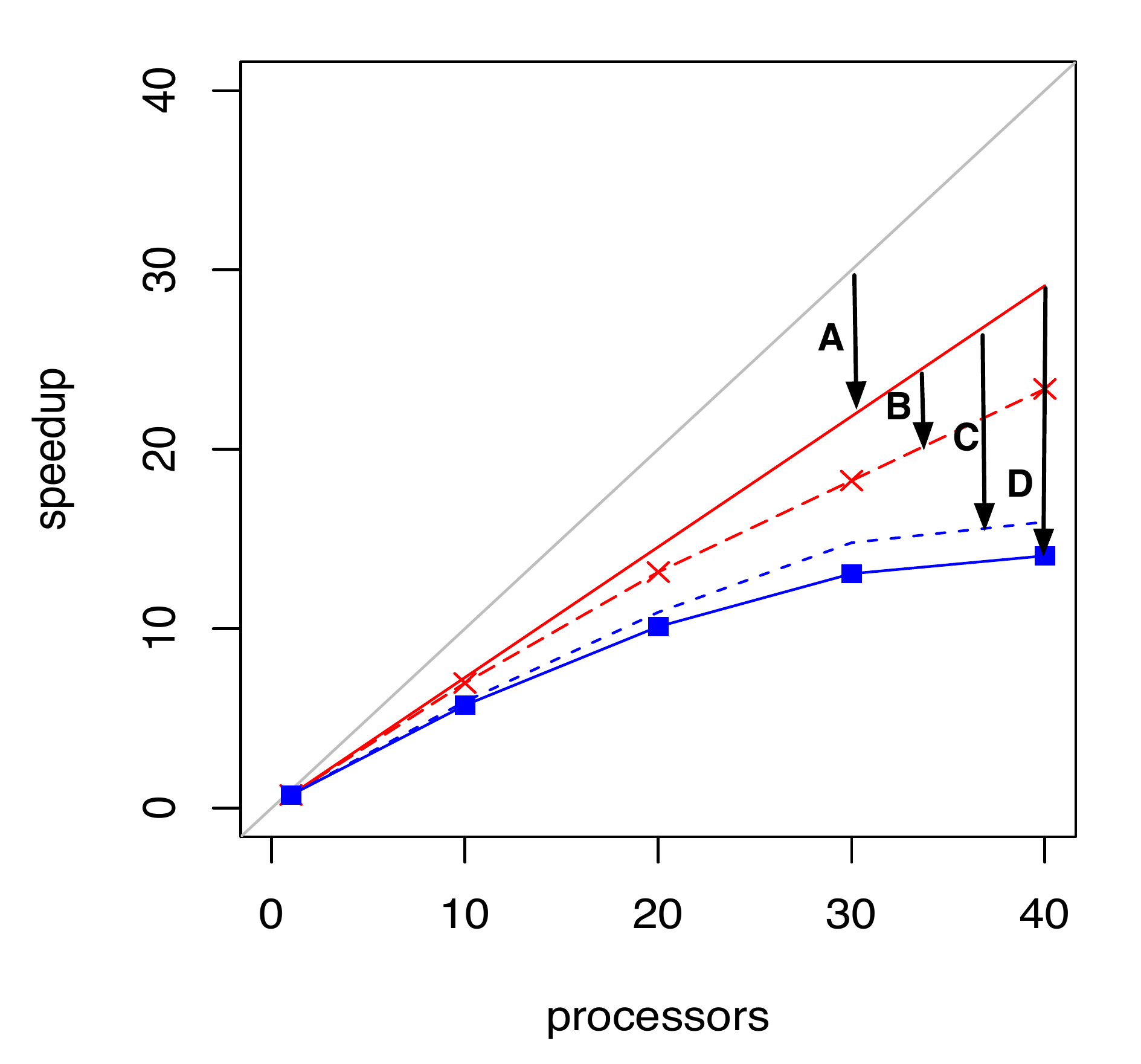}
\end{center}
\vspace{-1.0em}
\caption{Factored speedup curves (top to bottom): linear, maximal,
  idle-time specific, inflation specific, and actual speedups for
  Cilksort. The arrows indicate gaps between the speedup curves that
  helps identify the contribution of each factor.
  }
\label{fig:factored-speedup-poor}
\end{figure}

While developing algorithms and studying their efficiency with the
help of factored speedup plots, we have often been impressed by how
much work inflation (and thus speedups) could be affected, in
counterintuitive ways, by the degree of optimization of the program
code, and by the size of input data with respect to the size of the
cache. In order to illustrate the extent to which speedups can be
affected by these two aspects, we complete our paper with
microbenchmark studies demonstrating how seemingly minor changes in
the parameters can significantly affect the speedups measured.

Our contributions are as follows.
\begin{itemize}
\item We present a model, inspired by the work-span model, which 
  accounts for work inflation, even though work inflation cannot
  be predicted by any theory and cannot be measured directly.
\item We introduce {\em factored speedup plots} as a practical technique for
  visualizing the relative contribution of each of the three main sources of 
  slowdown considered by our model: overheads, idle time and work inflation.
\item We describe two artifacts that may significantly affect the
  interpretation of speedup curves.
  Although the
  existence of these effects is well know, we believe that the degree
  to which they can impact work inflation is often underappreciated.
\end{itemize}

\section{\hspace{-8pt}A Method for Diagnosing Performance Problems}%

We describe a method for diagnosing problems with performance and
scalability by identfying the contributions of the factors that
mentioned in the introduction.  For the purposes of mathematical
simplicity, at first, we do not consider scheduling overheads, and
we assume that our measurements (programs and schedulers)
are deterministic.  We later describe how to account for
non-determinism (\secref{method::non-deterministic}) and scheduling
overheads (\secref{method::scheduling}).

\subsection{Measures}
\label{sec:measures}

Given a parallel program, and given an associated sequential baseline program,
our approach relies on the four following measures.
\begin{itemize}
\item $T_s$, the execution time of the sequential baseline.

\item $T_P$, the execution time of the parallel program with $P$ cores.

\item $I_P$, the total idle time associated with the parallel program
(measured by instrumenting the scheduler).


\item $T_1$, the 1-core execution time of the parallel program. We
  call $T_1$ the ``parallel work with 1 core''.
\end{itemize} 

Measuring $T_s$, $T_P$ and $T_1$ is achieved by querying the
system time at the beginning and at the end of the executions.
In particular, it does not require any instrumentation of
the code being benchmarked.
Measuring $I_P$ is just slightly more complex. We measure the total 
idle time by instrumenting the main loop of the scheduler code that is
executed by each core, and which handles load balancing operations. 
We compute for each core
the sum of the duration of the periods of time during
which the core is waiting to acquire work. We call such periods
\E{idle phases}, and we measure their duration using 
unobtrusive cycle-counter instructions, which are provided
by modern multicore machines.

The total cost of our instrumentation of the scheduler is
negligible in front of the execution time of the program.
For each idle phase, we perform two queries to the cycle counters,
and update one field from the thread-local storage.
To end an idle phase, the processor needs to receive at least one task,
and the time required to complete the execution of this task is in general
a lot greater than the cost of measuring the duration of the idle phase.

Moreover, when a work-stealing scheduler is used, the total number of
idle phases is relatively small. More precisely,
the number of idle phases is bounded by $P-1$ plus
the number of steals, because initially all cores are idle but one,
and each idle phase can only end as a result of a successful steal.
Analysis of work stealing shows that, for all programs that exhibit
sufficient parallelism, the number of steals is, with high probability;
relatively small in front of the total number of tasks~\cite{acarchra13}.
In summary, the overhead of our instrumentation is, for all practical
purposes, negligible in front of the total execution time.

\subsection{Definitions}
\label{sec:defs}

Using the four measurements stated above,
we derive two additional quantities.
\begin{itemize}
\item $W_P$, the parallel work with $P$ cores.

\item $F_P$, the work inflation with $P$ cores.
\end{itemize} 

To see how to calculate these additional quantities, we start with a
simple fact.
\begin{fact}[time decomposition]
The total amount of time available to the $P$ cores during a run that
lasts $T_P$ time decomposes in work time and idle time.
\[
P \cdot T_P \Sc{=}  W_P + I_P. 
\]
\end{fact}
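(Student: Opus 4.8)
The plan is to read the identity as a conservation law for processor-time, partitioning the activity of each core into exactly two mutually exclusive categories. First I would fix a single deterministic run of the parallel program on $P$ cores, lasting wall-clock time $T_P$. Over the interval from the start to the end of the run, each of the $P$ cores is continuously available, so the aggregate amount of core-time supplied by the machine is exactly $P \cdot T_P$. This is the left-hand side, and the entire argument reduces to showing that this aggregate is split cleanly into work and idle.

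Next I would examine the timeline of a single core. At every instant during the run, the core is in one of two states: either it is executing an instruction of the program (working), or it is waiting to acquire a task (idling, in the sense of an \emph{idle phase} as defined in \secref{measures}). Because we have temporarily set scheduling overheads aside, these two states are exhaustive and disjoint, so the full duration $T_P$ for which the core is alive splits into its total work time plus its total idle time. Summing this per-core decomposition over all $P$ cores, the aggregate $P \cdot T_P$ becomes the total work time over all cores plus the total idle time over all cores.

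Finally I would identify the two resulting sums with the named quantities. The total idle time, summed across cores and across idle phases, is precisely $I_P$, the quantity obtained from the scheduler instrumentation described earlier. The remaining aggregate, namely the total time all cores spend actually executing program instructions, is what we take as the definition of the parallel work $W_P$ on $P$ cores. Substituting these identifications yields $P \cdot T_P = W_P + I_P$.

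The step I expect to carry the real weight is the exhaustiveness of the two-state partition in the second paragraph: the identity holds exactly only because every moment of a core's lifetime is charged either to work or to idleness, with nothing left over. This is where the standing assumption that scheduling overheads are ignored does its job; once those overheads are reintroduced (\secref{method::scheduling}), a core can be neither productively working nor idle but instead performing bookkeeping, and the clean two-way split — and hence the fact as stated — must be revisited. Since $W_P$ is in effect \emph{defined} by this decomposition, the only remaining content is to check that the measured $I_P$ coincides with the idle term, which follows directly from the way idle phases are instrumented.
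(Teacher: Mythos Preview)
Your proposal is correct and is exactly the accounting argument the paper has in mind: the fact is stated there without proof, as self-evident from the phrase ``decomposes in work time and idle time,'' and your two-state partition of each core's timeline is precisely the justification behind that phrase. You also correctly observe that $W_P$ is effectively \emph{defined} by this identity (the paper says the fact ``makes it immediately possible to calculate $W_P$''), and that the exhaustiveness of the work/idle partition depends on the standing assumption that scheduling overheads are neglected, which is later relaxed in \secref{method::scheduling}.
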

This fact makes it immediately possible to calculate $W_P$.

Recall that we define work inflation to be the increase in work as a
result of parallel execution.  This leads to the following fact.
\begin{fact}[definition of work inflation]
Work inflation (at $P$ cores) is the difference between the work performed
by the parallel program when using $P$ cores and the work performed by the same program
when using a single core. We therefore have:
$$F_P \Sc{=} W_P - T_1.$$
\end{fact}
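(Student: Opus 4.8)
The plan is to treat this statement as essentially a definitional unfolding, grounded in the time-decomposition fact stated just above. The quantity $F_P$ is meant to capture the increase in work caused by running on $P$ cores rather than on a single core, so I would first write the natural definition $F_P = W_P - W_1$, where $W_1$ denotes the parallel work performed when the program runs on one core. The entire content of the statement then reduces to showing that $W_1 = T_1$, i.e. that the single-core work equals the single-core running time, since $T_1$ was defined as the one-core execution time.

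To establish $W_1 = T_1$, I would instantiate the time-decomposition fact at $P = 1$, which gives $1 \cdot T_1 = W_1 + I_1$, that is $T_1 = W_1 + I_1$. It then remains to argue that $I_1 = 0$. Here I would appeal to the operational meaning of idle time introduced in the measurement section: an idle phase is a period during which a core waits to acquire work, and such waiting can only arise from load balancing among several cores. With a single core there is no load balancing and nothing to wait for, so the lone core is busy for the whole execution and accumulates no idle time. Hence $I_1 = 0$ and $T_1 = W_1$. Combining the two steps yields $F_P = W_P - W_1 = W_P - T_1$, as claimed.

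The main obstacle is not mathematical — the derivation is a one-line substitution — but conceptual: the only genuine justification required is the claim $I_1 = 0$, which rests on the operational assumption that a single-core run never idles. I expect that the crux of the argument, such as it is, lies in making this assumption explicit and tying it to the definition of an idle phase, rather than in any calculation. Indeed, since the statement is labeled as a definition of work inflation, its role is to record how $F_P$ is computed from the measured quantities, and the proof simply checks that this computation agrees with the informal notion of ``increase in work'' announced in the introduction.
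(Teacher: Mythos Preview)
Your proposal is correct and matches the paper's treatment, which presents this fact as a pure definition without further proof. The only difference is that the paper has already named $T_1$ ``the parallel work with 1 core'' in the measurements section, so the identification $W_1 = T_1$ is taken for granted rather than derived via the time-decomposition fact and the observation $I_1 = 0$; your extra step making this explicit is sound and arguably cleaner, but not something the paper spells out.
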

As shown by the fact below, we can calculate the work inflation $F_P$
%
\begin{fact}[formula for work inflation]
\label{fct:formula-work-inflation}
$$F_P \Sc{=} P \cdot T_P - I_P - T_1.$$
\end{fact}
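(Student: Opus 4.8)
The plan is to derive the formula by a single algebraic substitution, combining the two facts established immediately above; no new conceptual ingredient is required. First I would invoke the time-decomposition fact, $P \cdot T_P = W_P + I_P$, and rearrange it to isolate the parallel work, obtaining $W_P = P \cdot T_P - I_P$. The value of this step is that it re-expresses the otherwise abstract quantity $W_P$ entirely in terms of the directly measurable quantities $P$, $T_P$, and $I_P$.

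Next I would take the definition-of-work-inflation fact, $F_P = W_P - T_1$, and substitute the expression for $W_P$ just obtained. This gives
$$F_P \Sc{=} (P \cdot T_P - I_P) - T_1 \Sc{=} P \cdot T_P - I_P - T_1,$$
which is precisely the claimed identity. Since both antecedent facts are exact equalities holding over the same parallel run, the substitution is unconditionally valid, and the computation closes the proof.

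There is no genuine obstacle in the derivation itself: all of the mathematical weight has already been discharged in the two preceding facts (the accounting identity that splits the aggregate core-time $P \cdot T_P$ into work and idle time, and the definition that sets work inflation to be the excess of $P$-core work over single-core work $T_1$), both of which I am entitled to assume. The only point I would take care to stress is the one that gives the fact its practical significance rather than its correctness: every term on the right-hand side, namely $P$, $T_P$, $I_P$, and $T_1$, belongs to the four empirically obtained measures described in \secref{sec:measures}. Hence the fact shows that the hard-to-observe work-inflation quantity $F_P$ is fully determined by measurable data, which is the payoff I would highlight alongside the one-line substitution.
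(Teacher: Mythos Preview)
Your derivation is correct and matches the paper's approach exactly: combine the time-decomposition identity $P\cdot T_P = W_P + I_P$ with the definition $F_P = W_P - T_1$ by substituting for $W_P$. The paper treats this as immediate and does not spell out the substitution, so your write-up is, if anything, more explicit than the original.
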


For the purpose of analysing speedups (\secref{factored}) 
and of comparison with the work-span model (\secref{compare}),
we combine the previous facts so as to obtain a reformulation
of the parallel execution time in terms of the values
of $T_1$ (1-core execution time of the parallel program), 
$I_P$ (idle time) and $F_P$ (work inflation).
\begin{fact}[reformulation of parallel time]
\label{fct:reformulation}
The parallel execution time can be expressed as follows:
$$T_P \Sc{=} \frac{T_1 + I_P + F_P}{P}.$$
\end{fact}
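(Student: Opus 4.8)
The plan is to obtain the claimed identity by purely algebraic manipulation of the facts already established, with no new reasoning about the machine or the scheduler required. The cleanest route starts from \fctref{formula-work-inflation}, which gives $F_P \Sc{=} P \cdot T_P - I_P - T_1$, since that fact already packages together the time-decomposition fact and the definition of work inflation. In effect the statement to be proved is nothing more than this same equation solved for $T_P$ instead of for $F_P$.

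First I would rearrange the equation to isolate $P \cdot T_P$ on one side: adding $I_P + T_1$ to both sides yields $P \cdot T_P \Sc{=} T_1 + I_P + F_P$. Then dividing both sides by $P$, which is legitimate since $P$ is a positive number of cores, gives exactly $T_P \Sc{=} (T_1 + I_P + F_P)/P$, as desired. If one prefers not to invoke the formula for work inflation directly, I would instead reconstruct it inline from the two more primitive facts: the time-decomposition fact gives $W_P \Sc{=} P \cdot T_P - I_P$, and substituting this into the definition $F_P \Sc{=} W_P - T_1$ produces $F_P \Sc{=} P \cdot T_P - I_P - T_1$, after which the same two rearrangement steps complete the argument.

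There is essentially no obstacle here, so I do not anticipate a ``hard part'' in the usual sense. The only things to verify are that the substitutions are applied consistently and that division by $P$ is valid; both are immediate, the latter because $P \geq 1$. The content of the claim lies entirely in the operational meaning that the preceding definitions and facts assign to $T_1$, $I_P$, and $F_P$, and the reformulation merely repackages that content into the form most convenient for the speedup analysis of \secref{factored} and the comparison with the work-span model in \secref{compare}.
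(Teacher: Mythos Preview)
Your proposal is correct and matches the paper's approach exactly: the paper simply states that this reformulation is obtained by combining the previous facts, and your algebraic rearrangement of \fctref{formula-work-inflation} (or, equivalently, of the time-decomposition fact together with the definition of $F_P$) is precisely that combination.
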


\subsection{Factored speedup plots}

\label{sec:factored}

In order to better understand the effect of work inflation and idle time
on the speedup values achieved by parallel programs, we reformulate,
using \fctref{reformulation}, the expression of speedup values,
which is defined as the baseline time divided by the parallel time.

\begin{fact}[reformulation of speedups]
The speedup at $P$ cores can be reformulated as follows:
$$\F{speedup} \Sc{=} \frac{T_s}{T_P} \Sc{=} \frac{P \cdot T_s}{ T_1 + I_P + F_P}$$
\end{fact}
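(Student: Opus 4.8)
The plan is to obtain the claimed identity by a single algebraic substitution, drawing directly on the chain of facts already established. The starting point is the definition of the speedup at $P$ cores as the ratio of the baseline time to the parallel time, namely $T_s / T_P$. Since the numerator $T_s$ is a primitive measured quantity that requires no rewriting, all of the work lies in re-expressing the denominator $T_P$ in terms of the three components $T_1$, $I_P$, and $F_P$.

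For this I would invoke \fctref{reformulation}, which states $T_P = (T_1 + I_P + F_P)/P$. Substituting this expression for $T_P$ into the speedup ratio yields the compound fraction $T_s / \big((T_1 + I_P + F_P)/P\big)$. Clearing the inner denominator, that is, multiplying numerator and denominator by $P$, then gives $P \cdot T_s / (T_1 + I_P + F_P)$, which is exactly the right-hand side of the claim. This completes the derivation.

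There is, in truth, no substantive obstacle here: the result is an immediate corollary of the earlier reformulation of parallel time, which is itself a consequence of the time-decomposition fact together with the definition of work inflation. The only point requiring any care is bookkeeping, namely confirming that the factor $P$ migrates correctly to the numerator when the compound fraction is simplified, and that the notion of speedup in play is precisely $T_s / T_P$ rather than some normalized variant. Given the foundational nature of \fctref{reformulation}, the statement is best regarded as a restatement of that fact in speedup form rather than as an independent theorem, and the proof should accordingly be a one-line substitution.
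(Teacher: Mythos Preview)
Your proposal is correct and matches the paper's approach: the paper simply states the fact as an immediate consequence of \fctref{reformulation} together with the definition of speedup as $T_s/T_P$, which is exactly the one-line substitution you give.
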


Starting with this formula, we propose four speedup measures that
offer upper bounds of varying degrees of precision.  Analyzing these
speedups and the gaps between them we can determine the effects of
work inflation and other characteristics of the computation on the
performance.

\paragraph{Linear speedups.}
When using $P$ cores to perform a computation, we generally do not
expect the parallel execution to be more than $P$ times faster than
the sequential baseline. Therefore, the \E{linear speedup} at $P$ cores
is equal to the value $P$.

\paragraph{Maximal speedups.}
We define the quantity
\[
\frac{P \cdot T_s}{T_1}
\]
as the maximal speedup because it assumes the work inflation and the
idle time to be zero. Maximal speedups offer a realistic
upper bound on the parallel speedup by taking into account the
(possibly) additional work that must be performed by the parallel run
in relation to the sequential run.  

\paragraph{Idle-time-specific speedup.}
We define the quantity
\[
\frac{P \cdot T_s}{T_1 + I_P}
\]
as the idle-time-specific speedup because it assumes work inflation to
be zero but takes into account available parallelism (as measured by
the idle time).  

\paragraph{Inflation-specific speedup.}
We define the quantity
\[
\frac{P \cdot T_s}{T_1 + F_P}
\]
as the inflation-specific speedup because it assumes idle time to be
zero but takes into account work inflation. In the formula above,
since we cannot measure $F_P$ directly, we deduce it from $T_P$ and $I_P$.
More precisely, the inflation-specific speedup is computed as
${(P \cdot T_s)}/{(P \cdot T_P - I_P)}$.

\paragraph{Actual Speedups.}
By definition, the actual speedup is: 
$$\frac{T_s}{T_P}$$

\subsection{Minding the gap}

The three forms of speedups help analyze the empirical behavior of a
parallel algorithm by isolating several different effects into
different curves.  \figref{factored-speedup-poor} shows an example.
The linear speedup is drawn as the diagonal.  Right below it is the
maximal speedup, drawn with a solid black line. 

The gap labelled
\textbf{A} between the linear and the maximal speedups shows the
amount of the algorithmic overheads that can be expected from
parallelization and the overhead of thread creation.  In other words,
we can expect to match maximal speedups if the computation is fully
parallel and only on parallel hardware that is able to support all
operations with excellent scalability. 

Right below the maximal speedup
curve lies the idle-time-specific speedup curve, which takes into
account the amount of parallelism but excludes work inflation. The
gap labelled \textbf{B} between idle-time-specific speedup and the
maximal speedup shows the idle time, which, assuming an close-to-greedy
scheduler and a sufficiently-fine granularity of the tasks,
reflects the scarcity of parallelism in the computation:
the larger the gap, the scarcer is parallelism.  We can expect to
match idle-time-specific speedups only on parallel hardware that
exhibits no noticable communication overheads and is able to scale
memory operations well.  

Right below the idle-time-specific speedup
curve lies the inflation-specific speedup curve.  
The gap labelled
\textbf{C} between the maximal speedup and the inflation-specific
speedup illustrates the amount of work inflation: the larger the gap,
the greater the work inflation.  

At the bottom, the actual speedup curve reports on the speedups
actually measured. The speedups include all measured factors
(algorithmic overheads, idle-time, and work inflation.
The gap labelled \textbf{D} illustrates the amount of speedup 
lost to idle time and work inflation combined.
Finally, note that the gap between the actual speedup curve
and idle-time-specific speedup curve indicates the amount 
of work inflation, and that the gap between the actual speedup 
curve and the inflation-specific curve speedup indicates the
amount of idle time.

\vspace{-4pt}
\subsection{Minding the curvature}

In addition to studying the space between the curves, 
it is often also possible to deduce useful information 
from the curvature of the curves. A few features
are particularly informative.

If the maximal speedup curve is not a straight line but instead tends
to flatten, then this curve indicates that the amount of overhead
increases with the number of cores.  In such case, the algorithm
presumably would not scale up well with the number of cores.

Let us assume that the overhead curve appears as a straight line.  If
the idle-time-specific curve flattens towards a horizontal line, then
this curve indicates that the additional computation time provided by
using more cores is mostly wasted as idle time.  Presumably, the
program lacks parallelism.

If the inflation-specific curve flattens towards a horizontal line,
then this curve indicates that the additional computation time
provided by using more cores is almost entirely converted into work
inflation.  This situation is generally characteristic of a memory
bottleneck that limits the throughtput of the operations performed on
the main memory.

If the inflation-specific curve ends up slopping downwards, then this
curve indicates that using more cores actually degrades the
performance of the parallel program.  This situation is typically
caused by synchronization, in particular extensive use of either
atomic operations or false sharing or both.

\vspace{-4pt}
\subsection{Work-span model versus inflation model}
\label{sec:compare}

Comparing our proposed model with the work-span model brings out
interesting similarities and differences between the two approaches.
Consider a parallel program with work $T_1$ and span $T_{\infty}$, and
whose parallel time is $T_P$ on $P$ processors.  In the work span
model (based on Brent's theorem~\cite{Brent74}, using a greedy
scheduler), the parallel time, ignoring scheduling costs, is bounded
as 
$$T_P \Sc{\leq} \frac{T_1}{P} + T_{\infty}.$$

By comparison, our model does not provide an upper bound, but instead
the following  exact equality that involves (measured) idle time and (derivable)
work inflation (\fctref{reformulation}),
$$T_P \Sc{=} \frac{T_1}{P} + \frac{I_P}{P} + \frac{F_P}{P}. $$

If we ignore scheduling costs, then there are two important differences
between the work-span model and our approach.  First, our proposal
relies on the measurement of the actual average amount of idle time
per core (that is, ${I_P}/{P}$), rather than an upper
bound computed as a property of the computation (that is, the span
$T_{\infty}$).  Second, our proposal includes a term for work
inflation, whereas the work-span model does not account
for differences between the uniprocessor work
and multiprocessor work.


\subsection{Generalization to non-deterministic executions}
\label{sec:theory}
\label{sec:inflation}
\label{sec:method::non-deterministic}

In this section, we justify that our approach naturally extends 
to non-deterministic executions.
We call \E{run} a particular instance of a program execution. In particular,
for a parallel execution, the run describes the \E{schedule}, that is,
for each instruction, the time at which and the core on which it gets executed.
We let $\mathcal{R}_s$ denote the set of runs of the sequential baseline,
$\mathcal{R}_P$ the set of runs of the parallel program with $P$ cores,
and $\mathcal{R}_1$ the set of runs of the parallel program with $1$ core.
We let $R$ be a random variable ranging over one of these three sets of runs.

\begin{itemize}
\item Given a run $R$ in $\mathcal{R}_s$, we let $T_s(R)$ denote
the execution time of this run.

\item Given a run $R$ in $\mathcal{R}_P$, we let $T_P(R)$ denote 
the execution time of this run.

\item Given a run $R$ in $\mathcal{R}_P$, we let $I_P(R)$ denote 
the total idle time involved in this run.

\item Given a run $R$ in $\mathcal{R}_1$, we let $T_1(R)$ denote
the execution time of this run.
\end{itemize} 

We then define the parallel work of a run as follows:
$$ W_P(R) \Sc\equiv P \cdot T_P(R) - I_P(R)$$
We define the work inflation of a run as the difference between
the parallel work of this run and the \E{expected} work of a 1-core run.
Regarding the latter, we write  $\Exp[T_1(R')]$ the expected execution
time of a random run $R'$ in $\mathcal{R}_1$.
The formal definition of work inflation is thus:
$$F_P(R) \Sc\equiv W_P(R) - \Exp[T_1(R')].$$

We write $T_s$ the expected value of $T_s(R)$, for $R$ in $\mathcal{R}_s$.
Similary, we write $T_P$ and $I_P$ the expected values of $T_P(R)$ and $I_P(R)$,
respectively, for $R$ in $\mathcal{R}_P$, and
write $T_1$ the expected value of $T_1(R)$ for $R$ in $\mathcal{R}_1$.
(Note that $T_1$ is the same as $\Exp[T_1(R)]$.)
With this notation, the earlier definitions given for the deterministic case 
can be applied without any modification. In particular, we define:

\begin{itemize}
\item Ideal speedup, as the value $P$.

\item Maximal speedup, as the value $\frac{P \cdot T_s}{T_1}$.

\item Idle-time-specific speedup, as the value $\frac{P \cdot T_s}{T_1 + I_P}$.

\item Inflation-specific speedup, as the value $\frac{P \cdot T_s}{P \cdot T_P - I_P}$.

\item Actual speedup, as the value $\frac{T_s}{T_P}$.
\end{itemize} 

Observe that, in the formulae above, we have chosen to compute the ratios
of expected values, as opposed to the expected values of ratios.
For example, we define actual speedup as
$\frac{\Exp[T_s(R)]}{\Exp[T_P(R')]}$
and not as $\Exp[\frac{T_s(R)}{T_P(R')}]$.
The alternative choice would also be possible.
However, we believe that, given a sample of measured runs,
it makes more sense to report the speedup associated with 
the average execution time, rather than to report the average
speedup value, because speedups are only a tool for analysing
performance, whereas the execution time is what we ultimately
care to minimize.

\subsection{Accounting for scheduling costs}

\label{sec:method::scheduling}

In this section, we explain how our approach smoothly generalizes
to take scheduling costs into account.
We start by introducing the following additional variables:
\begin{itemize}
\item $S_1$, the scheduling work of a 1-core run of the parallel program.
\item $S_P$, the scheduling work of a P-core run of the parallel program.
\item $W_1$, the user work of a 1-core the parallel program. We call
  ``user work'' the work performed by the user code as opposed
  to that performed by the scheduler.
\item $W_P$, the computation work of a P-core the parallel program.
  Here, $W_P$ plays has the same role as before, 
   but it explicitly excludes the scheduling work, which is no longer neglected.
\end{itemize}
Even though the 4 quantities above are hard to measure directly, we can use them
to help us to provide valuable interpretation to the curves from factored speedup plots.

We define the \E{scheduling work inflation}, written $F_P^{\F{sched}}$, as
the difference between the scheduling work performed by $P$ cores and that
performed by a single core.
Symmetrically, we define the \E{user work inflation}, written $F_P^{\F{user}}$, as
the difference between the user work performed by $P$ cores and that performed
by a single core.
Finally, we define the \E{work inflation} to be the sum of the user work
inflation and the scheduling work inflation.
$$\begin{array}{l@{\;\,}l@{\;\,}l}
F_P^{\F{sched}} & {\equiv}& S_P - S_1  \vspace{3pt}\\
F_P^{\F{user}} &{\equiv} &W_P - W_1  \vspace{3pt}\\
F_P &{\equiv}& F_P^{\F{user}} + F_P^{\F{sched}}
\end{array}$$

As we are going to establish next, the value of $F_P$, which denotes the total
work inflation, can be computed from the same four measures as before.
To prove it, we begin with two simple observations.

\begin{fact}[decomposition of $1$-core execution times]
The execution time of $1$-core run of the parallel program
decomposes as user work plus scheduling work.
\[
T_1 \Sc{=}  W_1 + S_1. 
\]
\end{fact}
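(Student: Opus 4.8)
The plan is to treat this as a partition-of-time argument, entirely parallel to the earlier time-decomposition fact, but now refining the single category of ``work'' into two disjoint subcategories, user work and scheduling work. First I would observe that during a $1$-core run lasting $T_1$ time units, the total amount of processor-time available is exactly $1 \cdot T_1 = T_1$. Every instant of this time is spent by the lone core in exactly one of three mutually exclusive activities: executing user code (which by definition accumulates into $W_1$), executing scheduler code (which accumulates into $S_1$), or sitting idle while waiting to acquire work (which accumulates into an idle-time term $I_1$). This yields the preliminary decomposition
\[
T_1 \Sc{=} W_1 + S_1 + I_1.
\]

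The crux of the argument is then to show that $I_1 = 0$, i.e., that a single-core run incurs no idle time whatsoever. This follows from the operational definition of idle phases given in \secref{measures}: an idle phase is a period during which a core is waiting to acquire work, and, as noted there, such a phase can terminate only as the result of a \emph{successful steal}. With only one core present, there is no other core from which work could be stolen; equivalently, the count $P-1$ of initially-idle cores specializes to $0$ when $P = 1$, and no steals can occur. Hence the single core never enters an idle phase: from the start of the run to its completion it is continuously engaged in either user work or scheduling work. Substituting $I_1 = 0$ into the preliminary decomposition immediately gives $T_1 = W_1 + S_1$, as claimed.

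The only genuine subtlety, and thus the step requiring care, is the justification that $I_1 = 0$. This is not a purely arithmetic identity but rather a consequence of the scheduling model: it rests on the property that a single processor running the work-stealing scheduler never stalls, since the sole condition under which a core blocks is that it has exhausted its local work and must steal, an event that cannot arise with one core. Once this operational fact is granted, the decomposition is immediate, and it mirrors exactly the structure of the deterministic time-decomposition fact from \secref{defs}, specialized to $P = 1$ and with work now split into its user and scheduling components.
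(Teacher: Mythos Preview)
Your argument is correct and is essentially the same partition-of-time reasoning the paper relies on; the paper simply asserts this fact as a self-evident observation without spelling out the $I_1 = 0$ step, whereas you make that step explicit by invoking the operational characterization of idle phases from \sref{sec:measures}. Your elaboration is sound and faithful to the paper's framework.
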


\begin{fact}[decomposition of $P$-core execution times]
The execution time of a $P$-core run of the parallel program
decomposes as user work, plus scheduling work, plus idle time.
\[
P \cdot T_P \Sc{=}  W_P + S_P + I_P. 
\]
\end{fact}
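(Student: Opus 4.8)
The plan is to prove the identity by a time-accounting argument that refines the earlier Fact on time decomposition ($P \cdot T_P = W_P + I_P$, for the case in which scheduling work is neglected). The starting point is the same: during a run that lasts $T_P$, each of the $P$ cores is live for the entire duration, so the total amount of processor-time available is $P \cdot T_P$. The essential observation is that, once the scheduler is accounted for explicitly, at every instant of the run each individual core is in exactly one of three states: executing user code, executing scheduler code (load balancing, task management, and the associated bookkeeping), or idling while waiting to acquire work.

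First I would make the definitions precise, so that $W_P$, $S_P$, and $I_P$ are, respectively, the sums over the $P$ cores of the total time each core spends executing user code, executing scheduler code, and idling. For $I_P$ this coincides with the definition already given in \secref{measures}, namely the sum over cores of the durations of the idle phases. Next I would argue that these three states are mutually exclusive and jointly exhaustive at every instant, so that for a single core the time spent in the three states adds up exactly to $T_P$. Summing this per-core equality over all $P$ cores then yields
\[
P \cdot T_P \Sc{=} W_P + S_P + I_P,
\]
which is the claimed decomposition. In the non-deterministic setting this identity holds pointwise for each run, so it survives taking expectations, in the sense made precise in \secref{method::non-deterministic}.

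The main obstacle is justifying that the three-way classification of processor-time is genuinely exhaustive and exclusive, with every cycle assigned to exactly one bucket and none double-counted. The delicate cases lie at the boundary between the scheduler and idleness: a core that attempts an unsuccessful steal, spins, or transitions into or out of an idle phase must be charged unambiguously, either to scheduling work $S_P$ or to idle time $I_P$. Once this bookkeeping convention is fixed, the identity reduces to a one-line summation, and it specializes back to the earlier time-decomposition Fact by folding the scheduling work into the work term (equivalently, by taking $S_P = 0$).
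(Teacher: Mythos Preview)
Your proposal is correct and follows exactly the approach the paper relies on: the paper states this fact without proof, treating it as an immediate refinement of the earlier time-decomposition fact ($P \cdot T_P = W_P + I_P$) once scheduling work is separated from user work. Your time-accounting argument, including the care about boundary cases between scheduling and idling, is if anything more explicit than what the paper provides.
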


Combining these two facts and the three definitions above
shows that the total work inflation can be computed using 
exactly the same formula as before (recall \fctref{formula-work-inflation}),
when we ignored all scheduling costs.
\begin{fact}[formula for work inflation, with scheduling costs]
\[
F_P \Sc{=}  P \cdot T_P - I_P - T_1. 
\]
\end{fact}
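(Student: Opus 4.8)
The plan is to derive the formula purely by algebraic substitution, combining the two decomposition facts immediately above with the three definitions of work inflation ($F_P^{\F{user}}$, $F_P^{\F{sched}}$, and their sum $F_P$). There is no real mathematical subtlety here: the content of the statement is simply that the freshly-introduced scheduling quantities $S_1$ and $S_P$, together with the user-work quantities $W_1$ and $W_P$, cancel in exactly the right way, leaving an expression that depends only on the directly observable measures $T_P$, $I_P$, $T_1$, and $P$.

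First I would expand $F_P$ using its definition as the sum of the user and scheduling work inflations, writing $F_P = (W_P - W_1) + (S_P - S_1)$. I would then regroup, collecting the $P$-core quantities and the $1$-core quantities separately, to obtain $F_P = (W_P + S_P) - (W_1 + S_1)$. The purpose of this regrouping is that each parenthesised sum now coincides with the right-hand side of one of the two decomposition facts.

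Next I would substitute directly. By the decomposition of $1$-core execution times, $W_1 + S_1 = T_1$; by the decomposition of $P$-core execution times, $W_P + S_P = P \cdot T_P - I_P$. Plugging both into the regrouped expression yields $F_P = (P \cdot T_P - I_P) - T_1$, which is precisely the claimed formula.

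The only point worth remarking on is conceptual rather than technical: the individual quantities $W_1$, $W_P$, $S_1$, and $S_P$ are hard to measure in isolation, as noted in the text, yet they drop out of the final formula entirely. The ``obstacle,'' such as it is, lies not in the one-line calculation but in recognising that the \emph{same} four measurements suffice even once scheduling costs are no longer neglected, so that the factored-speedup methodology carries over without modification.
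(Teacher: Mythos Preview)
Your proof is correct and matches the paper's approach exactly: the paper simply states that ``combining these two facts and the three definitions above'' yields the formula, and your expansion $F_P = (W_P + S_P) - (W_1 + S_1) = (P\cdot T_P - I_P) - T_1$ is precisely that combination made explicit.
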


When taking into account scheduling costs, we continue using exactly the
same formulae for constructing factored speedup plots.
Only the interpretation of these plots needs to be refined slightly.
\begin{itemize}
\item The 1-core work ($T_1$) now includes the scheduling work at 1-core ($S_1$).
So, the maximal speedup curves includes not just the algorithmic overheads but also
the scheduling work at 1-core. 
\item The idle-time specific is, as before, based on
the maximal speedup, so it includes the scheduling work at 1-core.
\item The inflation-specific speedup also includes the scheduling work,
but also the scheduling work inflation. As we explain next,
the scheduling work inflation is typically negligible.
\end{itemize}

When using a work stealing scheduler, the scheduling work inflation only includes the
cost of performing load balancing and, if using concurrent deques, the possible increase
in the cost of accessing the deques due to concurrent accesses to the same cache lines.
For most practial applications, the number of steals is relatively small and 
the accesses to the deques are relatively cheap in front of the work performed by
the threads, so the scheduling work inflation ($F_P^{\F{sched}}$) is negligible. 
In such a case, the work inflation ($F_P$) can be considered equivalent to the user 
work inflation ($F_P^{\F{user}}$).

As a concluding remark, we observe that the scheduling work at 1-core ($S_1$) 
can be estimated by running the \E{sequential elision} of the parallel program.
This elision consists of a copy of the parallel code in which all parallelism
constructs are replaced with sequential constructs. For example, fork-join
operations are replaced with simple sequences. Let $T_{\F{elision}}$ denote
the execution time of the sequential elision. We can estimate $S_1$ by
considering the difference with the 1-core execution time of the parallel program.
In other words, $S_1 \Sc\approx T_1 - T_{\F{elision}}$.
When the sequential elision program is available, we can extend the factored 
speedup plot to report its execution time. To that end, we add an extra curve,
located above the maximal speedup curve, showing the points at height:
${(P\cdot T_s)}/{T_{\F{elision}}}$.
This additional curve is useful in particular to easily spot issues related
to granularity control, whereby the creation of too-small tasks imposes
significant scheduling overheads. When granularity control is performed
properly and an efficient scheduler is used, the new curve should collapse
with that of maximal speedups, reflecting the fact that scheduling costs ($S_1$)
are neglible.

\section{Case studies}

This section illustrates the application of our method on a multicore
machine with several different runs of a few benchmark programs. We
ported these programs from well-established benchmark suites, such as
the Cilk benchmarks and the Problem Based Benchmark Suite, to our
scheduling library. Although we selected only a few benchmark
programs, we emphasize that the methods we use are readily applicable
to any of the other benchmark programs in the respective suites and,
more generally, to any Cilk program.

\paragraph{Experimental setup.}
We conducted all the experiments described on our 40-core test machine.
The machine hosts 4 Intel E7-4870
chips running at 2.4GHz and has 1Tb of RAM. Each chip has 10 cores and shares
a 30Mb L3 cache. Each core has 256Kb of L2 cache
and 32Kb of L1 cache, and hosts 2 SMT threads, giving a total 
of 80 hardware threads, but to avoid complications with hyperthreading
we did not use more than 40 threads.
The system runs Ubuntu Linux (kernel
version 3.2.0-43-generic).
We also ran the same experiments on a 48-core AMD machine%
which features a deeper memory hierarchy,
and observed similar results.
All our programs are implemented in C++, compiled with GCC 4.8,
and rely on the scheduling library PASL,
which itself relies on a work stealing scheduler.
PASL provides two schedulers: one implemented with
concurrent deques (like in Cilk), and another one implemented using
private deques (see \cite{acarchra13}).  Both schedulers gave
similar results on the benchmarks described in the present paper.

\subsection{Case study 1: typical factored speedup plots}
\label{sec:case-study1}

\begin{figure*}[t]
\small
\begin{center}
\begin{tabular}{@{}c@{}c@{}c@{}c@{}c@{}}
      \hspace{-0em}
      \includegraphics[width=\smallwidth,height=\smallheight]{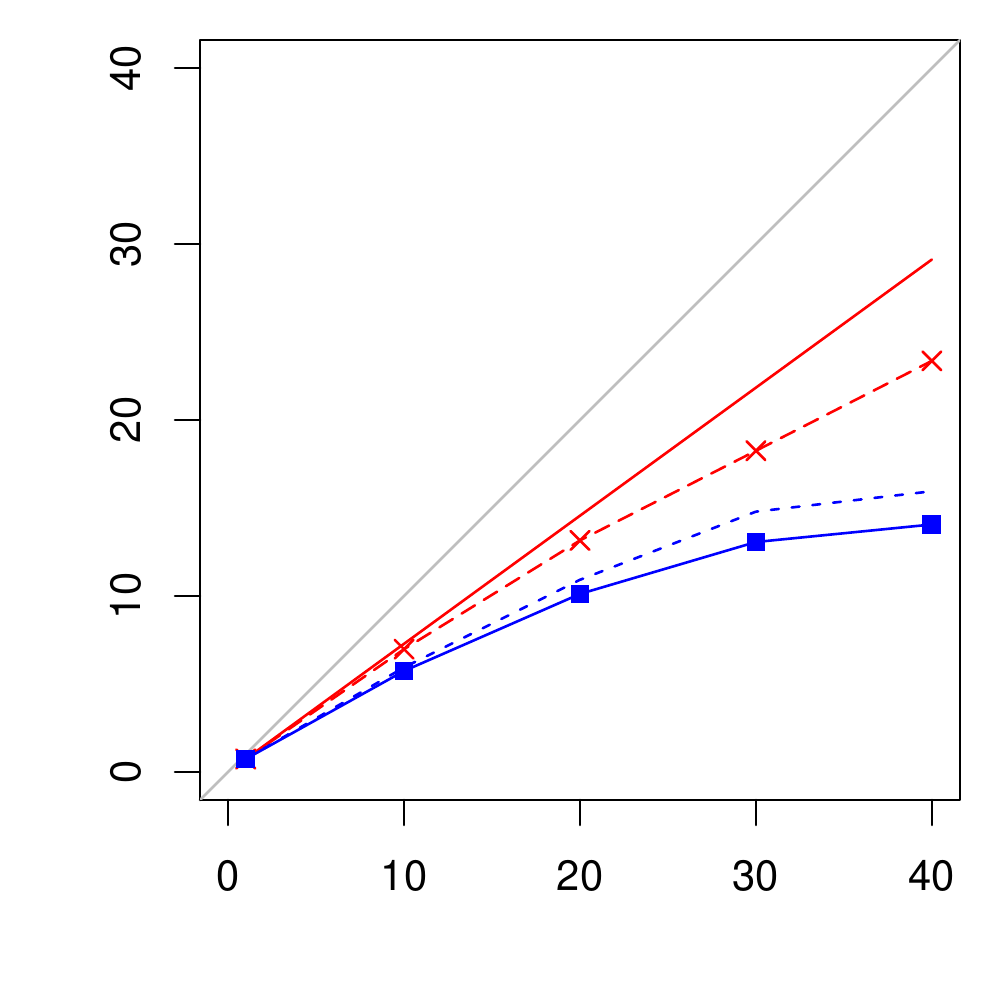}\figanalysisadjustheight        \vspace{-2em} &
      \hspace{-0em}
      \includegraphics[width=\smallwidth,height=\smallheight]{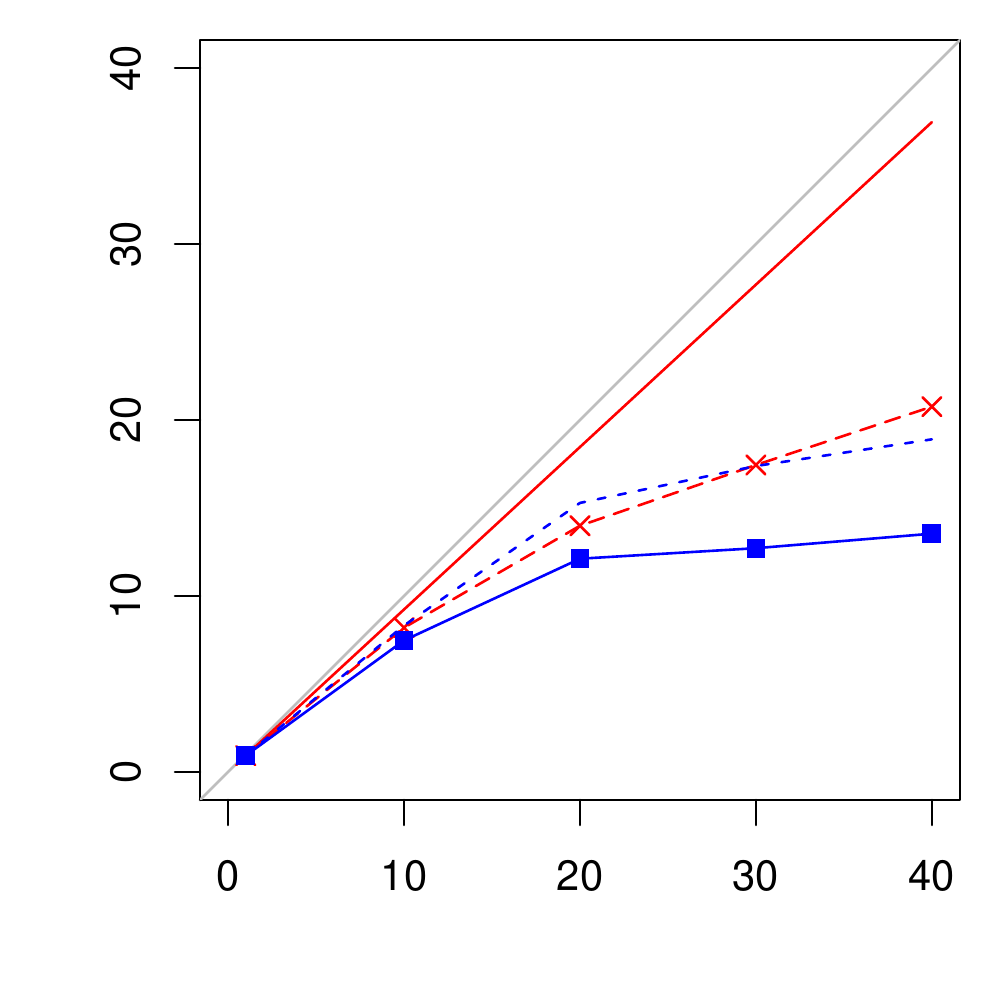}\figanalysisadjustheight &
      \hspace{-0em}
      \includegraphics[width=\smallwidth,height=\smallheight]{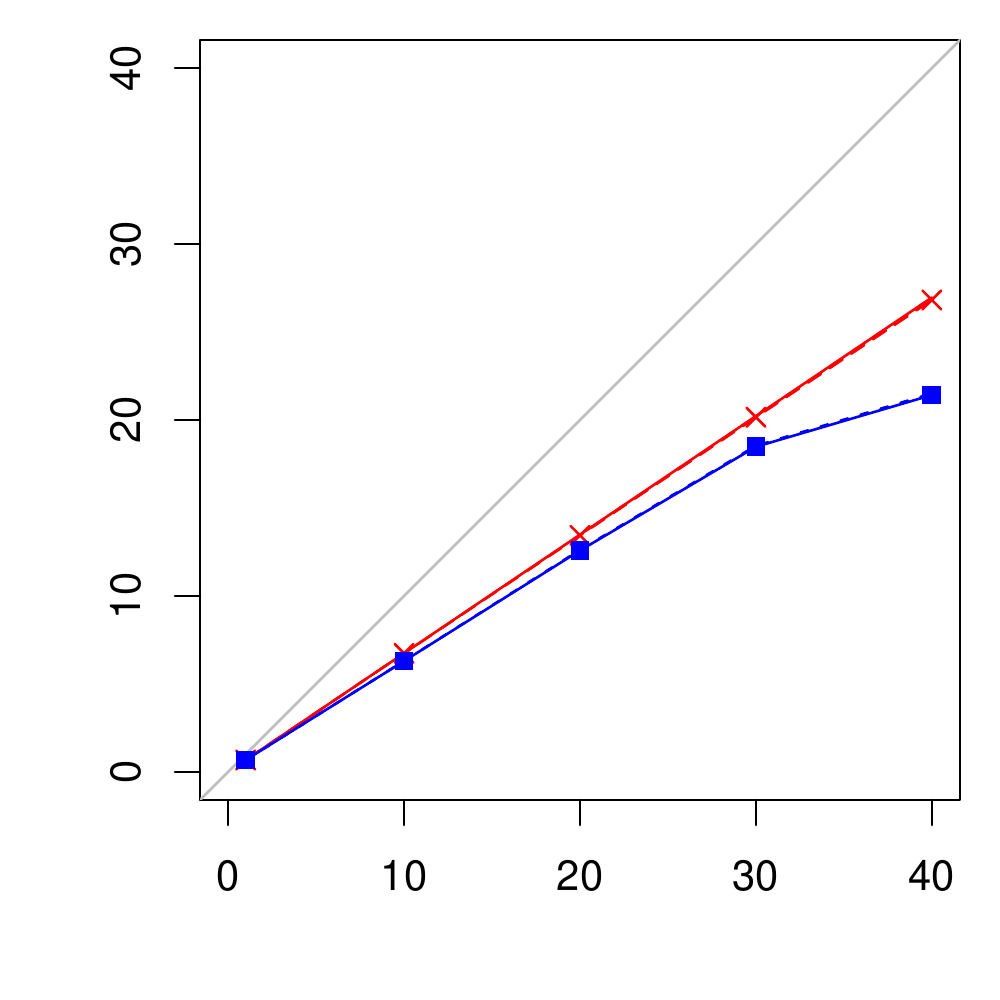}\figanalysisadjustheight &
\\
\quad(a) small cutoff, small array
&
\quad(b) very large cutoff, small array
&
\quad(c) small cutoff, large array
\end{tabular}
\end{center}%

\begin{center}
\begin{tabular}{@{}c@{}c@{}c@{}c@{}c@{}}
      \hspace{-0em}
      \includegraphics[width=\smallwidth,height=\smallheight]{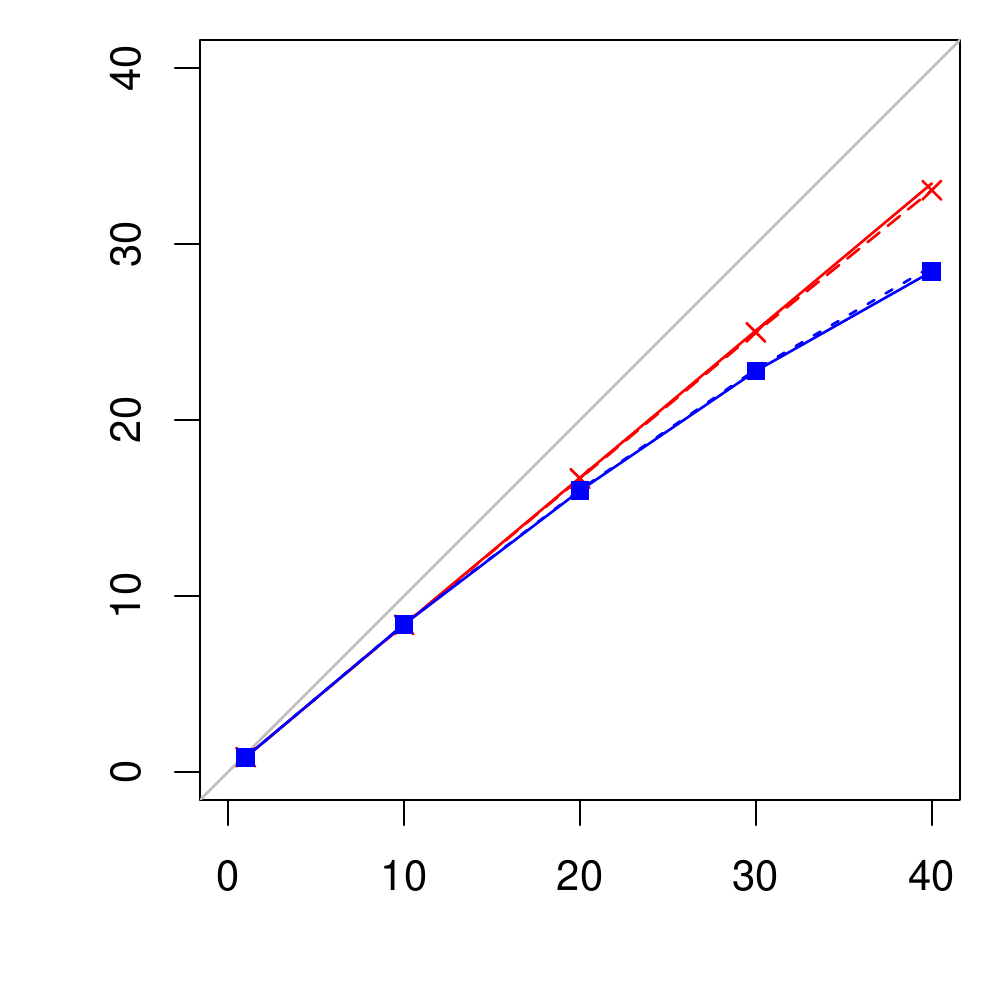}\figanalysisadjustheight \vspace{-2em} &
      \hspace{-0em}
      \includegraphics[width=\smallwidth,height=\smallheight]{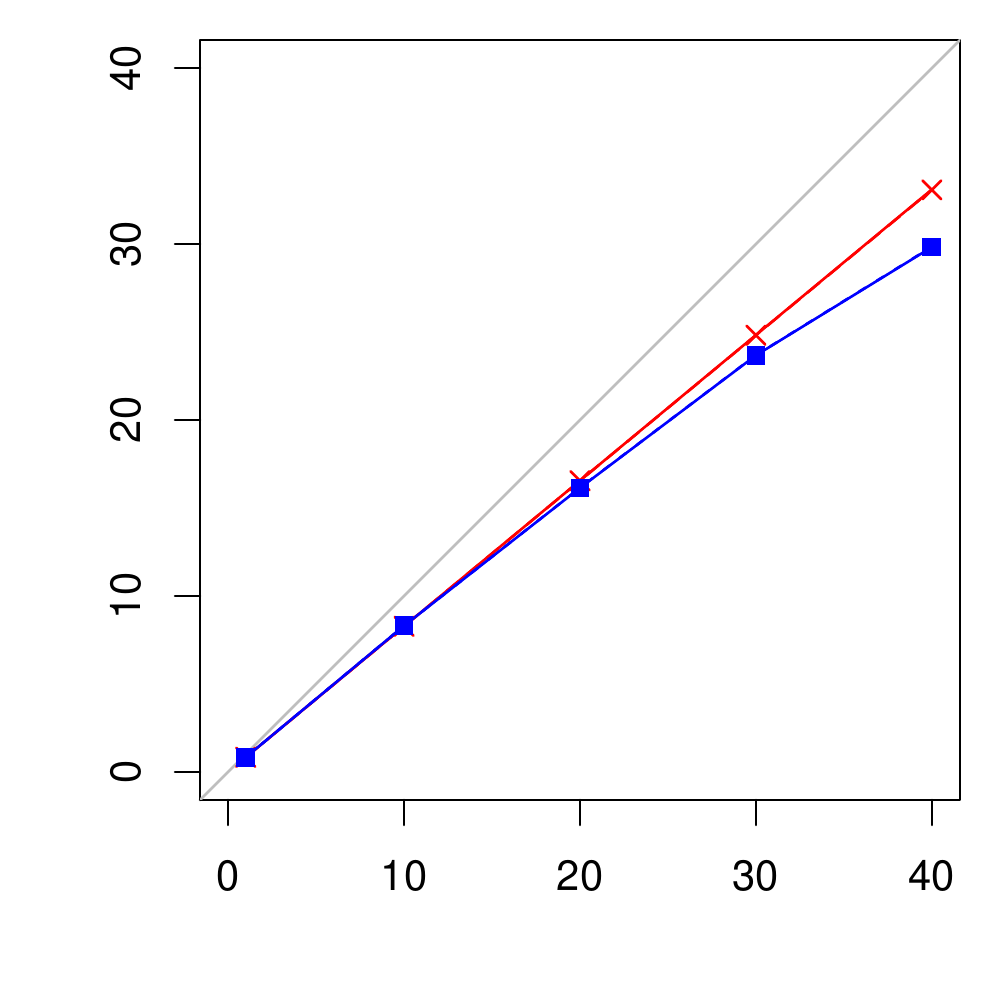}\figanalysisadjustheight &
      \hspace{-0em}
      \includegraphics[width=\smallwidth,height=\smallheight]{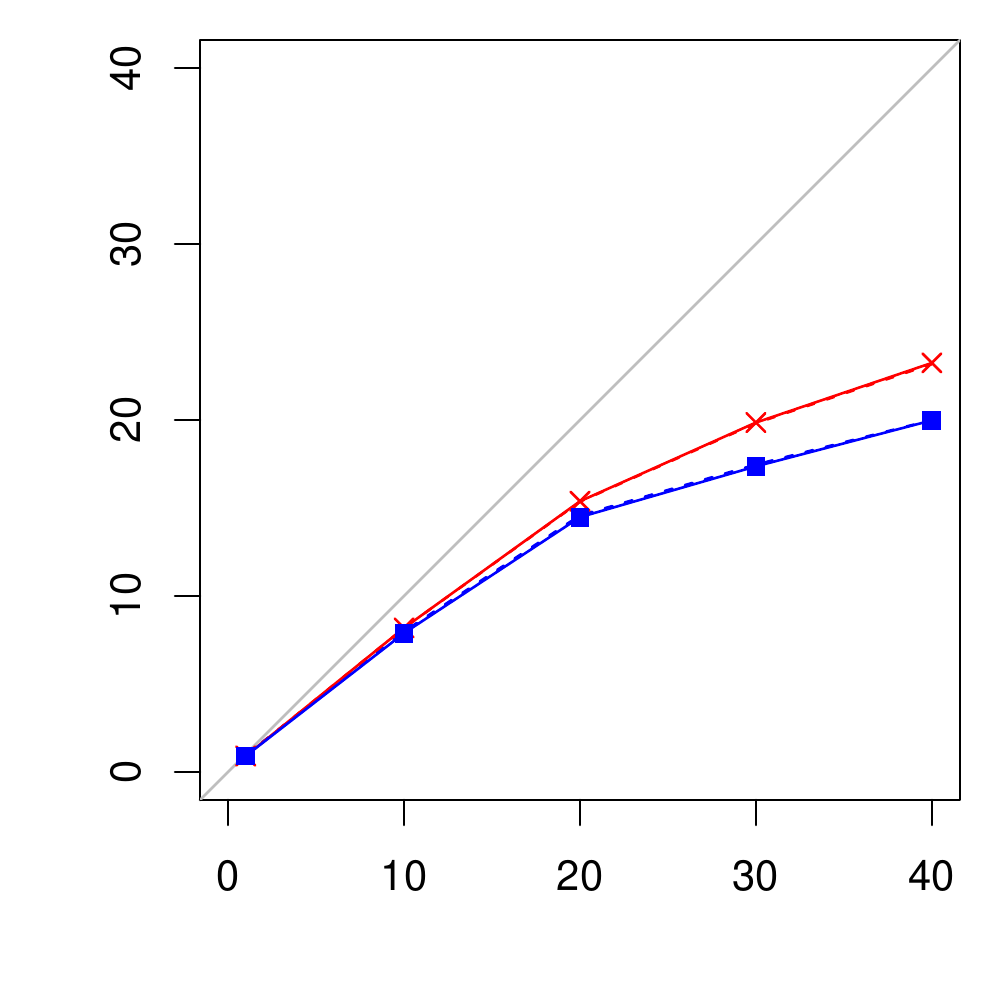}\figanalysisadjustheight &
\\
\quad(d) large cutoff, large array
&
\quad(e) large cutoff, huge array
&
\quad(f) cutoff in $O(\frac{1}{P})$, large array
\end{tabular}
\end{center}%

\caption{Case study 2: factored speedup curves for Cilksort
  benchmark.
  The straight, black curve represents the maximal speedup
  curve, the dotted, crossed one the idle-time specific curve, the
  dotted, blue one the inflation-specific curve, and the solid, blue
  one the actual speeedup curve.}
\label{fig:factored-speedups}
\end{figure*}

To illustrate the utility of factored speedup plots in practice, we
consider a classic benchmark program, namely Cilksort, and use
factored speedup plots to analyze its performance.  Cilksort sorts an
array of 32-bit integers, using a variant of merge-sort that relies on
a parallel merge operation, and relying on insertion-sort for sorting
sub-arrays of 20 elements or fewer. When the input is smaller than a
user specified {\em cutoff}, Cilksort reverts to sequential execution.
Sequentialized sorting uses the quicksort algorithm. Quicksort is also
used to measure the sequential baseline, used when computing speedup
values.  

In our experiment, we control the size of the input array,
and the cutoff, to determine how Cilksort behaves under different
settings.  (We use the same cutoff for both the sort phase and the
merge phase.)  
The goal of our experiments is to illustrate various typical 
type of factored speedup plots that one observe in practice.

In {\bf Chart (\ref{fig:factored-speedups}.a)}, 
we consider a small array, containing 200k items, and a
small cutoff, of 200 items.  This chart indicates that our
program suffers simultaneously from three problems: large parallel
work as indicated by the gap between the linear and the maximal speedup
curves, scarce parallelism as indicated by the gap between the maximal
and the idle-time-specific speedup curves, and work inflation as
indicated by the idle-time-specific and actual speedup curves.

In {\bf Chart (\ref{fig:factored-speedups}.b)}, we attempt to reduce 
the thread-creation overheads by increasing the cutoff size to 10k items.
The gap between the maximal speedup and the linear speedup closes indicating 
that we have successfully reduced parallel work.  Actual speedups, however, have
not improved---there are actually slightly worse---because parallelism
reduced as indicated by the increased gap between the maximal and the
idle-time-specific speedup curves. This suggests the cutoff is too
large for this input, pinpointing exactly the source of the problem.
Remark: the fact that the inflation-specific and idle-time-specific
speedups are at about the same height indicates that both work
inflation and idle time contribute to roughtly the same amount of
lost speedups.

In {\bf Chart (\ref{fig:factored-speedups}.c)}, we revert to the
smaller cutoff of  200 items and try instead to address the lack of
parallelism, by increasing the array size to 10 million items.
The overheads in Chart (\ref{fig:factored-speedups}.c)
are similar to those of Chart (\ref{fig:factored-speedups}.a),
which is expected since we used the same cutoff value.
The idle time has been reduced significantly, thanks to
the increase in the amount of parallelism available.
In this chart and the subsequent ones, the amount of idle time
is negligible, so idle-specific curves collapse onto 
maximal speedup curves, and inflation-specific curves collapse
onto actual speedup curves.

In {\bf Chart (\ref{fig:factored-speedups}.d)}, we target a
large array of 10m items and use a not-too-small cutoff of 1000 items.
The chart reports decent speedups (28.5x at 40 cores), and the trend
of the speedup curve suggests good scalability.

In {\bf Chart (\ref{fig:factored-speedups}.e)}, we increase
further the array size, up of 100m items, 
while keeping the same cutoff of 1000 items. 
The results are very similar to Chart (\ref{fig:factored-speedups}.d),
only with slightly better speedups (29.8x at 40 cores),
showing that beyond a certain point, creating more parallelism
no longer reduces the idle time.
In fact, from the position of the overhead curve, which reaches
33.1x at 40 cores, we can deduce that, no matter the array size, 
it is highly unlikely to ever exceed a speedup of 33.1x on our test machine.

With {\bf Chart (\ref{fig:factored-speedups}.f)}, 
we complete our case study with a last experiment which 
aims at illustrating a situation
where the amount of work varies with the number of processors. 
To that end, we provide to Cilksort a cutoff inversely proportional 
to the number of processors. Note that adapting the number of subtasks generated
to the number of processors is a classic technique, 
used for example in Cilk's compilation of for-loops.

For this last experiment, we consider an array of size 10m and a cutoff of $8000/P$.
As the value of $T_1$ actually depends on $P$, we write it $T_1^P$.
To obtain the values of $T_1^P$, we perform, for each value of $P$,
a single-processor run using the cutoff value $8000/P$.
The results, shown in Chart (\ref{fig:factored-speedups}.f),
indicate that the idle time is negligible, that the memory effects
are very limited, and that overheads are responsible for most of the lost speedups.
Furthermore, on the chart we are able to observe the curvature of the overhead 
curve. The fact that the overhead curve is not a straight line
but instead bends downwards indicates that the amount of overhead
increases with the number of processors.



In summary, by looking at the curvature of the curves
and the space between the curves of factored speedup plots,
we are able to visualize, all at once, 
the relative contribution to the loss in speedups of each of the 
three possible sources of slowdown identified by our model,
and also to visualize the trends of these contributions 
as the number of processors vary.

\subsection{Case study 2: effect of NUMA allocation policies}
\label{sec:case-study2}

We now describe how our factored speedup plot can be used to diagnose
memory bottlenecks. For this study, we consider the Maximal
Independent Set benchmark from the Problem Based Benchmark Suite. The
maximal independent set problem is the following: given a connected
undirected graph $G = (V,E)$ and find a subset of the vertices $U
\subset V$ such that no vertices in $U$ are neighbors in $G$ and all
vertices in $V \, \backslash \, U$ have a neighbor in $U$. For input
to the benchark, we used the 2-d grid with $140$m vertices. For the
baseline measurement, we use the sequential solution that is provided
by the Problem Based Benchmark Suite. The performance issue we
consider came to our attention when we ported the program from the
Cilk Plus dialect of C++ to be compatible with our native C++
scheduling library, namely PASL~\cite{acarchra13}.

The plots in \figref{case-study2} show two factored speedup plots
representing two different NUMA configurations of the same
application. The runs of plot (a) and (b) use the default and the
interleaved NUMA configurations respectively. We describe the meanings
of the two configurations after first considering the results we
observe from the default configuration. In plot (a), we notice that
the actual speedup curve starts to flatten by ten processors and
completely flattens by twenty. The flattening of this curve happens
even though there is clearly no lack of parallelism: we know there is
sufficient parallelism because the idle-time specific curve hugs the
maximal curve. The inflation-specific curve shows that the most
significant factor harming scalability is work inflation.

Knowledge of our machine led us to the next step, that is, to
conjecture that significant work inflation is imposed by effects
relating to non-uniform memory access (a.k.a, NUMA). NUMA implies that
memory-access time depends on the memory location relative to which
processor makes the access. Our benchmarking machine has four banks of
RAM, with one bank assigned to each physical chip in the machine. Each
bank of RAM is close to the ten cores on its corresponding chip and is
far from all the other cores. We suspected NUMA effects because
scaling drops significantly only when the number of cores exceeds
ten. This point is the point at which at least some of the cores have
to make remote accesses to access main memory.

We investigated the NUMA policies that are supported by our machine
and found that there are two of interest. In the default
configuration, namely the local or ``first-touch'' configuration, a
page in virtual memory is assigned a page in physical memory when the
page is first accessed. The page is assigned in physical memory to
memory bank of the core that makes the first access. The other
configuration of interest is the interleaved configuration, in which
pages are assigned to memory banks in round-robin fashion.  Although
the interleaved configuration increases cross-bank traffic relative to
the first-touch configuration, the interleaved configuration reduces
the chance of a bottleneck situation, in which much more memory
traffic goes through a few banks of RAM than through other banks.

Suspicious of such a bottleneck, we tried the interleaved NUMA
configuration. The actual speedup we get from this configuration is
shown in Figure \ref{fig:case-study2}(b).  Note that we can compare
the spedups of the two plots because all of the speedup curves use the
same baseline. The speedup achieved by the configuration is much
better than before, suggesting that, in the default configuration,
there was significant imbalance of NUMA assignments leading to
contention at the memory bus.

With these plots side by side, we can see additional patterns in the
respective curves. Observe that, even though it shows relatively poor
actual speedup, the first plot shows better maximal speedup. The
reason is that the single-processor run of the program runs faster
with the local than with the interleaved NUMA configuration. In other
words, the same NUMA configuration that harms the performance of the
sequential run helps the performance of the parallel run. Moreover,
this particular improvement comes into effect when the number of cores
exceeds ten, because the effect is a NUMA effect.

To summarize, while the factored speedups provided all the information
we needed to diagnose the NUMA issue, the curves gave us a clear
picture of where to start looking. In particular, the fact that the
curve flattens between ten and twenty processors gave us a strong hint
that the issue is NUMA related.

\begin{figure}[t]
\small
\hspace{-2.6em}
\begin{tabular}{@{}c@{}c@{}c@{}}
      \hspace{-0em}
      \includegraphics[width=1.9in,height=1.9in]{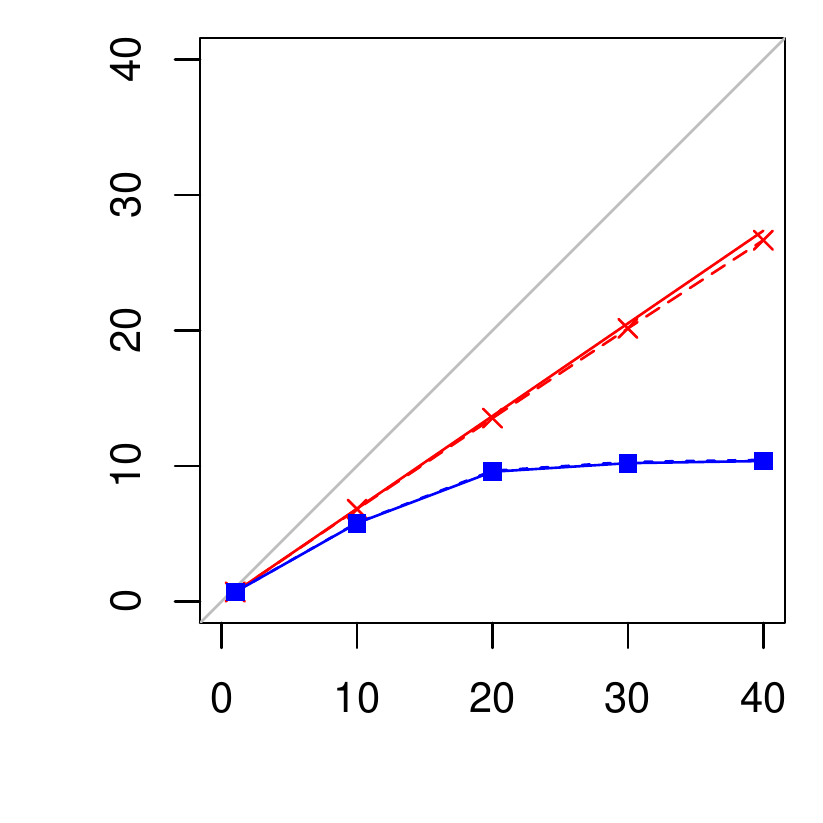}\figanalysisadjustheight \vspace{-2em} &
      \hspace{-1.5em}
      \includegraphics[width=1.9in,height=1.9in]{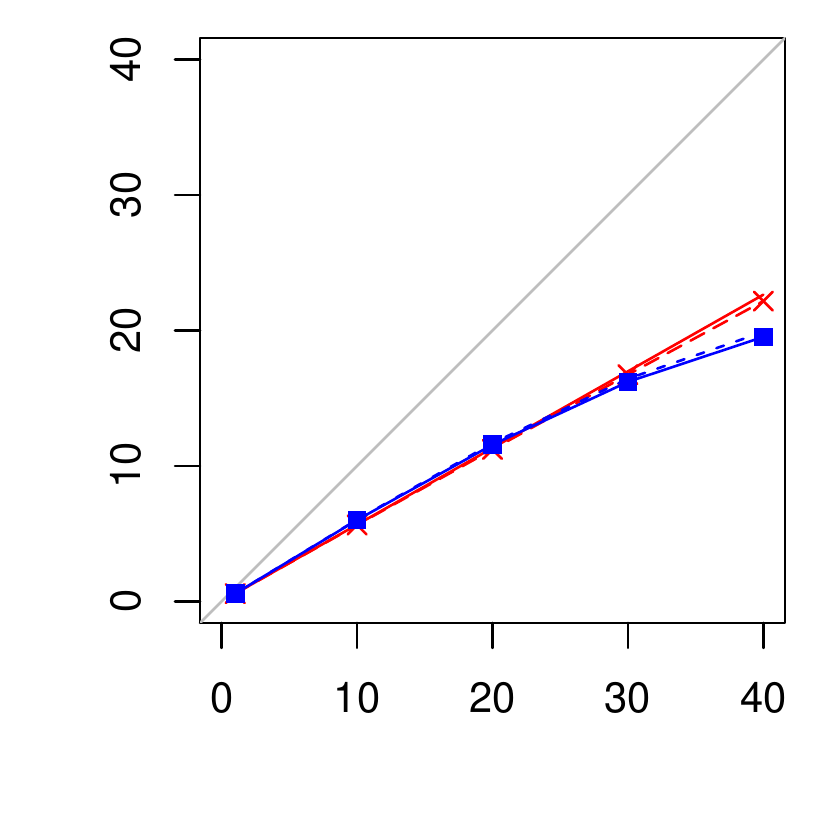}\figanalysisadjustheight &
\\
\quad(a) default
&
\quad(b) interleaved
\end{tabular}
\caption{Factored speedup curves for Maximal Independent Set
  benchmark.
The straight, black curve represents the maximal speedup
  curve, the dotted, crossed one the idle-time specific curve, the
  dotted, blue one the inflation-specific curve, and the solid, blue
  one the actual speeedup curve.}
\label{fig:case-study2}
\end{figure}

\section{Sources of Work Inflation}
\label{sec:sources}


In this section, we present what we believe to be two particularly
striking and subtle causes of work inflation.  To simplify their
presentation, we distill the causes of the work inflation in
simplified benchmarks.  Our measurements show that work inflation can
affect speedups by nearly a factor two. 
%
%
In particular, we show that the speedups achieved may greatly vary 
with the size of the input data considered, and that 
they may greatly vary with the degree of optimizations that applies to 
pieces of code involved both in the baseline program and in the parallel program.
In such circumstances, a higher degree of optimizations (which leads to reduced 
absolute execution time) may lead to smaller speedup values.

\paragraph{The benchmark.} To illustrate work inflation, we use a
simple array microbenchmark, which is controlled by three parameters:
array size $M$, a computation load $L$, a gap size $G$,
and a number of repetition $R$.
Given a set of values, the benchmark starts by
allocating $M$ 
cells each of which contain a single 64-bit integer.
The program then processes every cell of the array once,
and repeats this entire process $R$ times.
To process a cell $c$, the benchmark performs
$L$ integer additions using the value at $c$ and writes the resulting value back
into $c$. We implement the parallel for-loop by dividing the total
range until a sufficiently small range of 1000 items, \aremark{check it was 1000...}
which are then processed sequentially.

When the gap size $G$ is equal to $1$, each thread processes
a group of 1000 consecutive array items sequentially. 
When the gap size $G$ is more than $1$,
threads still process groups of 1000 items, but acting over
items spaced out by $G$ cells, in such a way that, ultimately,
each array cell gets processed exactly once.
To be precise, the $i$-th cell processed is that at index 
``$(iG + \lfloor \frac{iG}{M} \rfloor) \F{mod} M$'' in the array.
By considering values of $G$ greater than $1$, for example $32$,
we are able to greatly increase the number of cache misses.

\paragraph{Input size and work inflation.} 

Our first experiments illustrate an interesting relationship between
input data size speedups. On the one hand, it is well-known that, with small
inputs, parallel programs may not generate sufficient parallelism to
result in good speedups. 
On the other hand, large inputs that do not fit in the L3 cache
lead to numerous cache misses, and they are typically associated with 
important levels of work inflation because the main memory becomes the bottleneck.
As we show, however, there can be a range of input instances large enough 
to generate abundant parallelism, and nevertheless small enough to 
avoid significant work inflation. With such input instances, one is able to
measure speedup values much greater than speedups that could be
achieved when scaling to a larger number of cores or to larger input instances.

\newcommand{\sizeexpsize}{2.5in}

\begin{figure}[t]
\small
\begin{center}
\includegraphics[width=\sizeexpsize,height=\sizeexpsize]{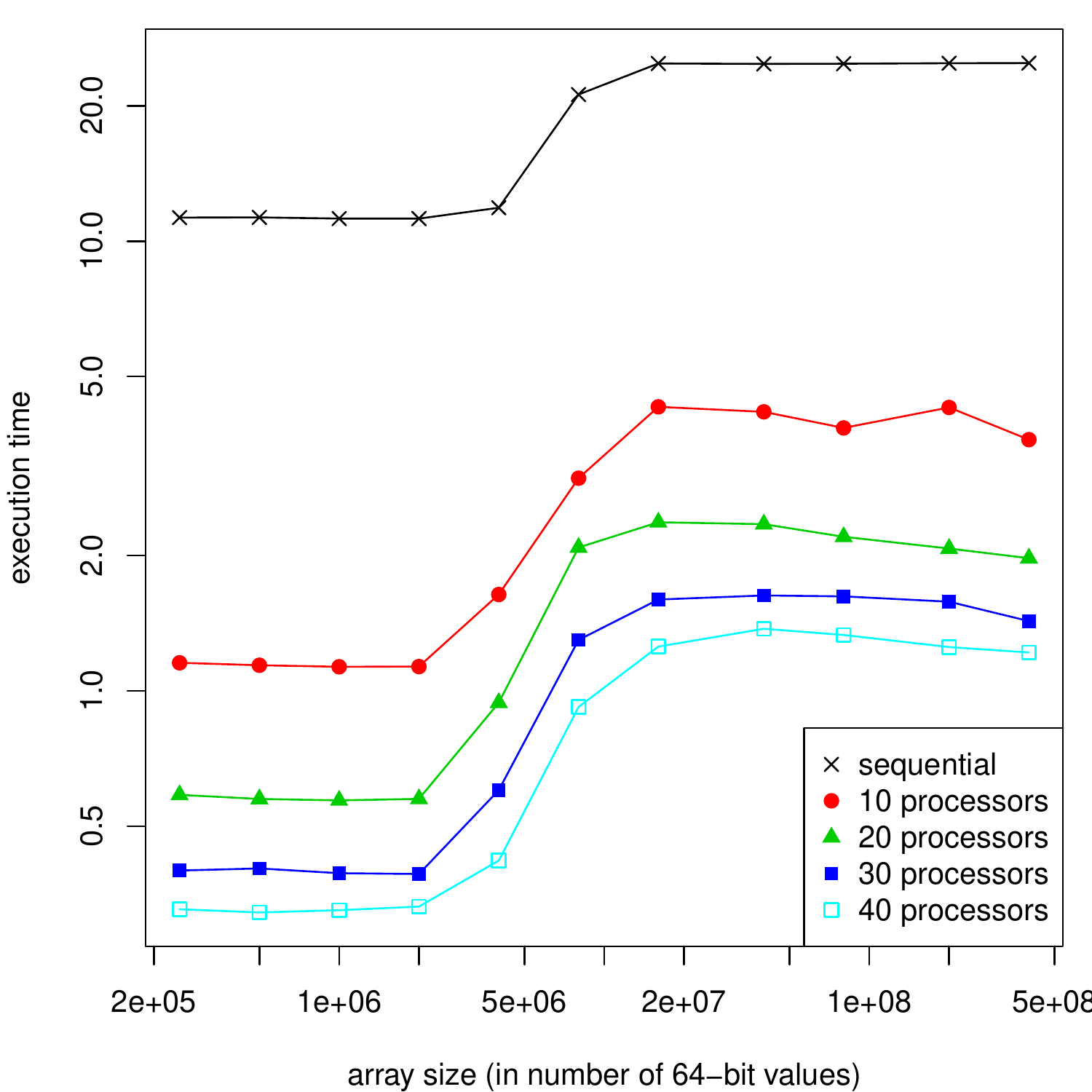} 
\end{center}%
\begin{center}
\includegraphics[width=\sizeexpsize,height=\sizeexpsize]{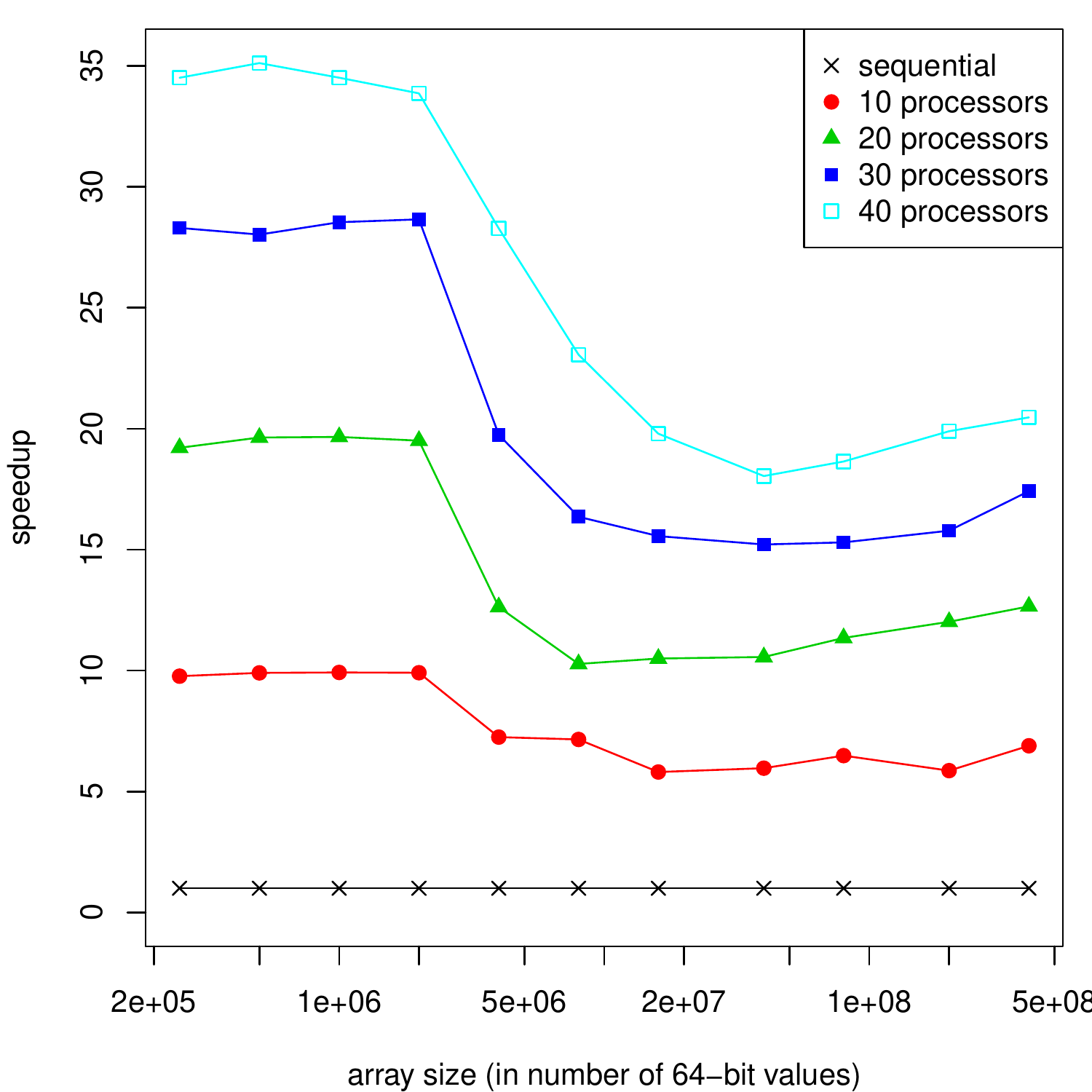} 
\end{center}%
\caption{Runtime (top) and speedup (bottom) versus array size illustrated.}
\label{fig:exp-size}
\end{figure}

\figref{exp-size} illustrates the runtime and speedup for our
microbenchmark with different array sizes $M$ and different numbers of
processors.  In these experiments, we set the gap size to be $G=32$, and
set the repeat count $R$ to be $\frac{4\cdot 10^8}{M}$ so that the
total number of operations (a measure of the complexity of the
benchmark) remains the same for all input sizes (i.e., $4\cdot 10^8$). 
The runtime curve (\figref{exp-size}, top) shows that compared with
small input sizes, a sequential run, the topmost curve, is 2.2 times
slower for inputs larger than $16 \cdot 10^6$ ---increasing from
11.3 seconds to 24.9 seconds. This outcome is
expected, because the 30MB L3 cache of this processor approximately $4
\cdot 10^6$ (64-bit) integers.  What is interesting is that the slowdown is
amplified in parallel runs.  For example with $30$ cores, larger arrays
are $3.7$ times slower compared with the smaller ---increasing from 0.37 seconds
to up to 1.37 seconds.
While it is generally known that higher number of cache misses
slow down a program execution, what is interesting here is that 
this slow down affects performance differently at different sizes.
This behavior is likely due to the saturation of the memory bus 
at high parallel loads.

The fact that, when increasing the array size, parallel runs are slowed 
down more than sequential runs indicates that the work inflation
increases with the array size. A direct consequence is that, as shown by
the curve at the bottom in \figref{exp-size}, speedups can decrease
significantly when operating on larger arrays.  For example, 
with 40 cores, the speedup for small array is close to 35x, but 
with larger arrays it drops below 20x.

In summary, while with small inputs, the benchmark achives nearly
perfect speedups, at large input sizes, the speedups decrease
significantly.  This suggests that work inflation can be significant
and it should be accounted for by considering a range of input sizes,
not just those input sizes that provide sufficient parallelism. 

\paragraph{Work inflation and optimization.}

Since speedups are calculated with respect to a baseline sequential
program by calculating the ratio of the runtime of the
sequential baseline to the runtime of the parallel code, it might be
concluded that optimizing both programs to the same degree would
suffice to perform a fair evaluation.  In fact, the parallel
code is often written by using the pieces of the sequential code, as this
is often the easy and the natural thing to do. 
%
As we show next, speedups can be highly sensitive
optimizations, not just because optimizations can improve the baseline
performance---which is generally known and understood---but also
because optimizations can impact serial and parallel code in different
ways, by leading to different amounts of work inflation.

\begin{figure}[t]
\small
\begin{center}
\includegraphics[width=\sizeexpsize,height=\sizeexpsize]{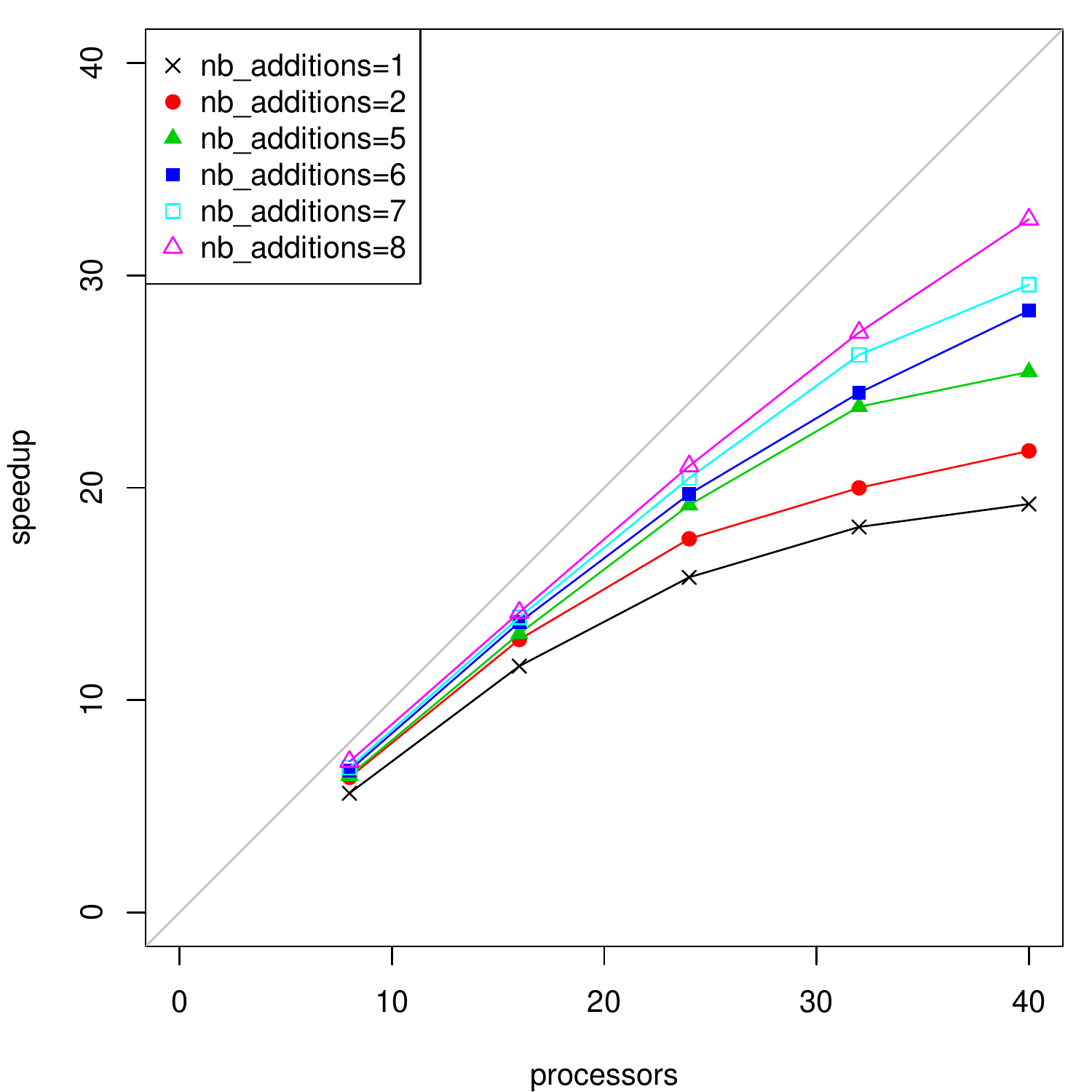} 
\end{center}%
\caption{Impact of the computational load (number of additions
  performed between memory operations with gap size $1$.
}
\label{fig:exp-artificial-gap}
\end{figure}

To demonstrate the effect of optimization on work inflation, we
consider our simple microbenchmark and run it with $M = 600 \cdot
10^6$ (that is, a 4.8Gb array), $R = 1$, and different values of computational
load $L$ ranging from $1$ to $8$.  Recall that the microbenchmark
performs $L$ additions after reading a cell and writes back the
computed value to the memory.  The differing values of $L$ suggest
what can happen with highly optimized code $L = 1$ and poorly
optimized code $L = 8$. 

The plot \figref{exp-artificial-gap} shows the curves for %
different values of $L$ that we consider. The measurements show that
the more the additions, the better the speedups.
The implication is that 
additional work due to more additions creates relatively less work
inflation.  
This implication is likely true because in parallel runs, all
computation becomes memory bound, waiting for the memory operations to
complete, during which time, cores can perform the addition operations
(which commute) locally, without having the value of the cell being
updated until it finally arrives.  This property implies that the
addition operations are parallelized by the hardware to overlap with
the memory operations, reducing the relative significance of work
inflation.  We tested this hypothesis in two ways.  First, we changed
the addition operations to operations to commute with the
reads; this change reduced the relative work inflation, ultimately improving
the speedups.  
Second, we ran the benchmark with larger values of $L$,
thereby increasing the memory latency for the sequential run, and thereby
decreasing the relative work inflation. 
 
In summary, when memory operations become a bottleneck in the parallel
run, increased computational load due to non-agressive optimization
can artifically increase speedup by reducing relative work inflation.
It is therefore not sufficient to optimize the sequential baseline and
the parallel code to the same degree.  The baseline as well as the
parallel code should be highly optimized in order to make sure that
the effects of work inflation are not masked.

\vspace{-4pt}
\section{Related work}

\paragraph{Prediction of parallel speedup.}

Cilkview~\cite{HeLeiserson10}, Intel Parallel
Advisor~\cite{IntelParallelAdvisor}, Intel Parallel
Amplifier~\cite{IntelParallelAmplifier}, and
Kismet~\cite{JeonGarciaLouie11} are software tools whose purpose is to
profile and to analyze the potential scalability of programs on an
arbitrary number of cores.  Cilkview, Intel Parallel Advisor, and
Intel Parallel Amplifier rely on user-supplied annotations, whereas
Kismet tries to automatically detect parallelism in the application.
Our method focuses instead on identifying the causes of suboptimal speedup of a
given parallel program on a given machine with a fixed set of cores.

\paragraph{Modeling parallel performance.}

Our techniques and those used for Cilkview share a common basis in the
DAG model of computation.  However, we use the DAG model in different
ways to achieve different goals.  On the one hand, the Cilkview
profiler measures the work and span during the instrumented run of a
parallel application on a single processor.  The Cilkview analyzer
predicts from the work and span the upper and lower bounds on the
speedup curves that can be achieved by the application on an arbitrary
number of processors.  On the other hand, based on a mix of sequential
and parallel runs, our analyzer plots, next to the actual speedup
curve, a synthetic speedup curve that projects the amount of speedup
lost due to idle time and parallelism overheads, allowing to visualize
the amount of speedups lost due to memory effects.


In Cilkview, work and span are measured by number of instructions
issued by the program, as opposed to wall-clock time.  By considering
instruction counts, the scalability prediction of Cilkview is
completely oblivious to memory effects that could substantially harm
scalability.  Our work, although it is limited in that it considers
only typical execution paths as opposed to worst-case execution paths,
is able to deduce the amount of memory effects that impact the
parallel runs.

Cilkview, being based on the work-span model, tries to evaluate the
span.  To that end, it considers a ``burdened-dag model'', where the
weight of fork nodes is burdened with an estimate of the cost of
thread migration. The span measured in this burdened DAG gives a
worst-case estimation of the span. In our work, we do not try to
measure the span at all. Instead, we rely on the measure of the actual
idle time, as explained in \sref{sec:compare}.  Cilkview may
nevertheless provide a complementary role in helping to estimate
worst-case bounds on the idle time.

\paragraph{Identifying sequential bottlenecks in big programs.}

The HPCToolkit~\cite{TallentMellorCrummey07, TallentMellorCrummey09} is a
software tool for profiling big parallel software that consists of
many functions.
HPCToolkit reports, on a per-function basis, estimated values of
parallel idle time and parallelism overheads.
Kremlin~\cite{GarciaJeonDonghwan11} is another software tool whose
purpose is to help guide the parallelization of large preexisting
sequential programs.
Kremlin, like HPCToolkit, 
focuses on the
question: \textit{what parts of the program are most profitable to
  parallelize?}
As such, the primary focus of these tools is to assign blame to
pieces of code that are imposing bottlenecks to parallelization.

In contrast, our focus is to analyze the performance of algorithms
individually rather than to try to analyze the relative performance of
multiple algorithms in the same program.
Put another way, our focus concerns the stage after the programmer has
identified a bottleneck code.
At this point, the goal is to isolate the code and benchmark it 
independently 
to try and improve its scalability.

Often, blame-assigning tools, such as HPCToolkit and Kremlin, neglect
to report in a synthetic way complementary pieces of information that 
would be helpful for understanding causes of poor speedup.
Our factored speedup plots show a global view of the actual parallel
performance of the optimized, production-ready code.
In addition to providing a synthetic view of the data, our factored speedup
plots show the speedup trends as the number of processors vary.
The trends are useful, among other things, for extrapolating the
ability of an algorithm to scale up to larger number of cores.

\paragraph{Profiling techniques.}

The aforementioned profilers, as well as other related 
ones~\cite{Reed93scalableperformance,MohrAllenShendeWolf02,
MooreWolfDongarraShendeAllenMohr05}, collect rich profiling
data from instrumented runs of an application.
Although sometimes useful, rich profiling data is not necessarily
the best approach.
Problematically, the instrumentation itself may affect the
performance of the application being profiled.
On the contrary, our approach relies on practically zero-overhead
instrumentation and as such can be applied to production-ready
user code.

In our approach, the required instrumentation consists of measurement
of run time of the sequential baseline program, single-processor run
time of the parallel program, run times of the parallel program on
different subsets of the available processors, and total parallel idle
time for each parallel run.
All of these metrics are trivial to measure and can be readily
measured in almost any platform.
Many other profilers require substantial implementation effort in
the form of compiler support or binary instrumentation. 

To summarize, while we acknowledge the interest of 
full-program analysis and of rich instrumentation, we have found
that our approach, despite being very lightweight, is able to report
a large amount of useful information helping to analyse the scalability 
issues affecting a particular parallel algorithm.

\vspace{-6pt}
\section{Conclusion}

On modern hardware, the impact of memory effects on the performance 
of parallel program is too important to be neglected. 
While these effects have shown difficult to model accurately,
developers of parallel programs could greatly benefit of tools
for analysing the relative impact of memory effects. 
In this paper, we have presented a simple model for the analysis of parallel
computations. Our model is tailored for the analysis of experimental 
performance results, and it aims an analysing samples of executions.
In that respect, it contrasts with the traditional work-span model,
which provides a theory for computing bounds for worst-case executions.

Our model is based on the simple observation that, by sampling the
execution time of single-processor runs and measuring idle time in
parallel runs, we are able to deduce the amount of memory effects.
Moreover, we have shown how to plot charts for visualizing
the amount of speedups lost due to overheads, that lost due to idle time, 
and that lost due to memory effects. These charts allow to visualize
not only the relative contribution of each source of slowdown, 
but also their trend as the number of processors grow. 
Although we have not seen such charts appear previously in the literature,
they are, in our experience, helpful for the day-to-day
development of parallel algorithms.



\small
\vspace{-5pt}
\bibliographystyle{plain}
\bibliography{main}

\end{document}

